\newcommand{\ignore}[1]{}
\newcommand{\notinproc}[1]{#1}
\newcommand{\onlyinproc}[1]{}
\newtheorem{thm}{Theorem}[section]
\newtheorem{theorem}{Theorem}[section]
\newtheorem{lemma}[thm]{Lemma}
\newtheorem{corollary}[thm]{ Corollary}
\DeclareMathOperator{\Laplace}{\mathcal{L}}
\DeclareMathOperator{\BinM}{\mathcal{B}}
\DeclareMathOperator{\LapM}{\mathcal{L}^{c}}
\DeclareMathOperator{\dCount}{\textsf{Distinct}}
\DeclareMathOperator{\MdCount}{\textsf{MxDistinct}}
\DeclareMathOperator{\MxDCount}{\textsf{MxDistinct}}
\DeclareMathOperator{\TdCount}{\text{{\sf TDistinct}}}
\DeclareMathOperator{\SUM}{\textsf{Sum}}
\DeclareMathOperator{\MAX}{\textsf{Max}}
\def\Cap{\textsf{cap}}
\def\softCap{\widetilde{\textsf{cap}}}
\def\loss{\textsf{loss}}
\def\relerr{\textsf{relerr}}
\def\Exp{\textsf{Exp}}
\newcommand\E{\textsf{E}}
\begin{document}
\title{HyperLogLog Hyper Extended:\\ Sketches for Concave Sublinear
  Frequency Statistics}

\author{Edith Cohen}
\affiliation{%
  \institution{Google Research, USA}
\institution{Tel Aviv University, Israel}
}
\email{edith@cohenwang.com}



\begin{abstract}
 
  One of the most common statistics computed over data elements is the
  number of distinct keys. 
A thread of research pioneered by
  Flajolet and Martin three decades ago culminated in the design of 
optimal approximate counting sketches, which have size that is
  double logarithmic in the number of distinct keys and provide estimates
  with a small relative error.  Moreover, the sketches are
  composable, and thus suitable for streamed,
  parallel, or distributed computation.

 We consider here all statistics of the frequency distribution of keys, where a contribution
  of a key to the aggregate is concave and grows (sub)linearly with
  its frequency.  These fundamental aggregations are very
  common in text, graphs, and logs analysis and include logarithms,
  low frequency moments, and capping statistics.

We design composable sketches of double-logarithmic size for
all concave sublinear statistics.  
Our design combines theoretical
optimality and practical simplicity. 
 In a nutshell, 
we specify tailored
  mapping functions of data elements to output elements so that
our target statistics on the data elements is approximated by the
  (max-) distinct statistics of the output elements, which can be
  approximated using off-the-shelf sketches.
Our key insight is relating these target statistics  to the
{\em complement Laplace} transform of the input frequencies.

\end{abstract}

\maketitle
\section{Introduction}

\SetKwData{Okeys}{OutKeys}
\SetKwData{Oelems}{OutElements}

  We consider data presented as elements $e = (\text{\em{e.key}},\text{\em{e.value}})$ where
  each element has  a {\em key}  and
 a positive numeric {\em value} $ > 0$.  
 This data model is very common  in streaming or distributed 
  aggregation problems. 
A   well-studied special case is where $\text{\em{e.value}} \equiv 1$ for all
  elements.

  One of the most fundamental statistics over such data is the number
  of distinct keys: $\dCount(E) = \{\text{\em{e.key}}\mid e\in E\}$.  Exact computation of the statistics requires
  maintaining a structure of size that is linear in the number of distinct keys.
A pioneering design of Flajolet and Martin \cite{FlajoletMartin85}
showed that an approximate count can be obtained in a streaming model
using structures (``sketches'') of logarithmic size.
Since then, a rich research strand
proposed and analysed a diverse set of approximate counting
sketches and deployed them
for a wide range of applications \cite{hyperloglogpractice:EDBT2013}.

Distinct counting sketches can be mostly classified as based on
sampling (MinHash sketches) or on random
  projections (linear sketches).
Both types of structures are mergeable/composable:
This means that when the elements are partitioned, we can compute a
sketch for each part separately and then obtain a corresponding sketch
for the union from the sketches of each part.
This property is critical for making the sketches
suitable for parallel or distributed aggregation. 

  The original design of \cite{FlajoletMartin85} and the leading
ones used in practice use sample-based sketches.  In particular, 
the popular Hyperloglog sketch \cite{hyperloglog:2007} 
has double logarithmic size $O(\epsilon^{-2}+\log\log n)$, where $n$ is the
number of distinct keys and $\epsilon$ is the target
normalized root mean squared error (NRMSE).
Since this size is necessary to represent the approximate
  count, Hyperloglog is asymptotically optimal.
We note that  the Hyperloglog sketch contains
$\epsilon^{-2}$ registers which store exponents of the estimated
count.  Thus, explicit representation of the sketch has 
size $O(\epsilon^{-2}\log\log  n)$, but one can theoretically use instead a single exponent
and $\epsilon^{-2}$ constant-size offsets 
  (e.g. \cite{KNW:PODS2010,ECohenADS:TKDE2015}) to  bring the sketch
  size down to $O(\epsilon^{-2}+\log\log n)$, albeit by
  somewhat increasing updates complexity.
Another point is that Hyperloglog uses random hash
functions which have logarithmic-size representations.  Some
theoretical lower bounds consider the hash representation 
to be part of the sketch \cite{ams99,KNW:PODS2010}), implying a
logarithmic lower bound on size.
 We follow \cite{FlajoletMartin85,hyperloglog:2007}  and consider the hash 
  representation to be provided by the platform, which is consistent with practice where hash functions are
reused and shared by multiple sketches.

  We now consider other common statistics over elements.  In particular,
  statistics expressed over a set of (key, weight) pairs, where the weight
  $w_x$ of a  key $x$,  is defined to be the sum of the values of data
  elements with key $x$: $$w_x = \sum_{e \mid \text{\em{e.key}}=x} \text{\em{e.value}}\
  .$$   Keys that are not active (no elements with this key) are 
  defined to have $w_x = 0$.   Note that if all elements have value
  equal to $1$, then
  $w_x$ is the number of occurrences of key $x$.
  For a nonnegative function $f(w) \geq 0$ such that $f(0)\equiv 0$,  we define the
  {\em $f$-statistics} of the data as $\sum_x f(w_x)$.  We will find
  it convenient to work with the notation $W(w)$ for the number of
  keys with $w_x = w$.   Equivalently, we can treat
  $W$ as a distribution over weights $w_x$  that is scaled by the number of
  distinct keys.  We can then express the $f$-statistics (with a slight notation abuse) as 
\begin{equation}f(W)   = \int_0^\infty  W(w)f(w) dw\ .\end{equation}

 The design og sketches that approximate $f$-statistics over streams of elements was
  formalized and popularized in a seminal paper by Alon, Matias, and Szegedy
  \cite{ams99}.   The aim was to understand the necessary sketch size
  for approximating different statistics and in particular understand
  which statistics can be sketched in size that is polylogarithmic in the number   of keys.

 We focus here on functions $f$ that are
concave with (sub)linear nonnegative
growth-rate.  Equivalently, these are all
nonnegative linear combinations (the nonnegative
{\em span}) of {\em capping} functions
$$\Cap_T(x)   = \min\{T,x\}\    \text{parameterized by}\ T>0\ ,$$
and we therefore use the notation $\overline{\Cap}$.
Members of $\overline{\Cap}$ that parametrize a spectrum
between distinct count ($f(x)=1$) and sum ($f(x)=x$) include
  {\em frequency moments} $f(x) =
  x^p$ in the range $p=[0,1]$ (sum is $p=1$ and distinct count is $p=0$),  capping functions (sum is realized by $\Cap_\infty$ and distinct count by $\Cap_1$ when element
  values are integral and by $f(x)=\Cap_T/T$ as $T\rightarrow 0$
  generally), and   the {\em soft capping}
functions
\begin{equation} \label{softcapdef:eq} \softCap_T = T(1-\exp(-x/T))\ .\end{equation}
Soft 
capping is a smooth approximation of ``hard'' capping functions: For 
$x\ll T$ we have $\softCap_T(x) \approx x = \Cap_T(x)$,  for 
$x \gg T$ we have $\softCap_T(x) \approx T = \Cap_T(x)$, and it always
holds that
\begin{equation}\label{softcap:eq}
\forall x,\, (1-1/e) \Cap_T(x) \leq  \softCap_T(x) 
\leq \Cap_T(x)\ . 
\end{equation}
Other important  $\overline{\Cap}$ members are $\log(1+x)$ and capped moments.

Statistics in 
$\overline{\Cap}$ are used in applications to decrease the impact of very 
frequent keys and increase the impact of rare keys.  
It is a common practice to weigh frequencies, say degree of
nodes in a graph \cite{NandanwarM:kdd2016} or frequency of a term in a document~\cite{SaltonBuckley1988}, by a sublinear
function such as $w^p$ for $p\in (0,1)$ or $\log(1+w)$.  
In many applications, the ability to approximate the statistics over the raw data, without the cost 
of aggregation, can be very useful.
One such application is to online advertising 
\cite{GoogleFreqCap,facebookFreqCap}, data elements are 
interpreted as opportunities to show ads to users (keys) that are interacting with various apps on different 
platforms.  
An advertisement campaign specifies a maximum number of 
times $T$ an ad can be displayed  to the same user, so the number of 
qualifying impressions corresponds to $\Cap_T$ statistics of the data. 
Statistics are computed over past data in order to estimate the number 
of qualifying impressions when designing a campaign. 
Another application is the computation of
word embeddings.   The objective  is to have the focus and
context embeddings of words captures the respective co-occurrence
frequencies.  Glove~\cite{PenningtonSM14:EMNLP2014} achieved
significant improvements by using 
$f(w) = \min\{1,w/T\}^\alpha$ for $\alpha<1$ (instead of $f(w)=w$).  Typically, 
the text corpus is presented as complete text documents, and elements in arbitrary order are extracted in a pass over
this data.

  There is a very large body of work on the topic of approximating
  statistics over streamed or distributed data and it is not
  possible to mention it all here.  Most of the prior work uses linear sketches (random linear projections).  A sketch for the second moment, inspired by the JL transform \cite{JLlemma:1984}, was presented by \cite{ams99}.
  Indyk \cite{indyk:stable} followed with
a beautiful construction based on stable distributions
of sketches of size $O(\epsilon^{-2}\log^2 n)$ for
moments in $p\in [0,2]$. 
Braverman and Ostrovsky \cite{BravermanOstro:STOC2010} presented a
characterization and an
umbrella construction of sketch structures, based on $L_2$ heavy
hitter sketches, for all monotone $f$-statistics.  The structure size
is polylogarithmic  but is
practically too large (degree of the
polylog and constant factors).

Sample-based sketches for $\overline{\Cap}$ functions were presented
by the author~\cite{freqCap:KDD2015}.  
The sketch is a weighted sample of keys that supports
approximate $\Cap$-statistics on domain queries (subsets
of the keys).   The framework generalizes both distinct reservoir
sampling \cite{Knuth2f,Vit85} and the sample and hold stream sampling 
 \cite{GM:sigmod98,EV:ATAP02,flowsketch:JCSS2014}. The size and
quality tradeoffs of the sample 
are very close (within a small constant) to those of an optimal sample that can be efficiently
computed over aggregated data (set of key and weight pairs).  Roughly,
a sample of $O(\epsilon^{-2})$ keys suffices to approximate
$\Cap_T(W)$ unbiasedly with coefficient of variation (CV) $\epsilon$.  
Moreover, a {\em multi-objective} (universal)  sample (see \cite{multiw:VLDB2009,multiobjective:2015}) of 
$O(\epsilon^{-2} \log n)$ keys can approximate with 
CV $\epsilon$ {\em any}  $f$-statistics for $f\in \overline{\Cap}$.
When this method is applied to sketching statistics of the full data, we can hash
key identifiers to size $O(\log n)$ (to obtain uniqueness with very
high probability) and obtain sketches of size $O(\epsilon^{-2} \log
n)$ and a multi-objective sketches of size $O(\epsilon^{-2} \log^2
n)$.  One weakness of the design is that these sketches
are not fully composable: They apply on streamed elements (single
pass) or take two passes over distributed data elements. 

A remaining fundamental challenge is to
design composable sketches  of size $O(\epsilon^{-2}
\log n)$ for each $\overline{\Cap}$ statistics and a
composable multi-objective sketch of size $O(\epsilon^{-2}
\log^2 n)$.  Given the practical significance of the problem, we seek
simple and highly efficient designs. A further theoretical challenge
is to design sketches that meet or approach the double-logarithmic representation-size 
lower bound of $O(\epsilon^{-2}+ \log \log n)$.

\ignore{
\begin{quote}
{\bf Q1: } 
Can we approximate all statistics in $\overline{\Cap}$ using 
structures of size $O(\epsilon^{-2} + \log \log n)$ ?
\end{quote}

\begin{quote}
{\bf Q2: } 
Can we design composable structures for estimating $\overline{\Cap}$
statistics of size that is logarithmic  in $n$ ?
\end{quote}
Linear sketches are fully composable but current structures have
larger sizes.
The sampling-based framework of \cite{freqCap:KDD2015} included a streaming
scheme and a two-pass composable scheme (the first pass
determines the keys in the sample and another computes their exact
weights). But we seek a one-pass composable scheme.
}

\subsection*{Contributions overview and organization}
We address these challenges and make the following contributions.
We show that
{\em any} statistics in the {\em soft capping span}
    $\overline{\widetilde{\Cap}}$ can be approximated with the essential
  effectiveness and estimation quality of Hyperloglog.
That is, we present composable sketches of size
$O(\epsilon^{-2}+\log\log n)$ that estimate the statistics with
RNMSE $\epsilon$ with good concentration.
The soft capping span $\overline{\widetilde{\Cap}} \subset \overline{\Cap}$
is the set of functions that can be expressed as
\begin{equation} \label{softspanf:eq}
f(w) = \int_0^\infty a(t)
(1-e^{-wt}) dt \ , \text{ where } a(t)\geq 0\ .
\end{equation}
In particular, the span includes 
all soft capping functions, low
frequency moments ($f(w)=w^p$ with $p\in (0,1)$), and $\log(1+w)$.
We also present a composable {\em  multi-objective} sketch for
$\overline{\widetilde{\Cap}}$.  This is a single structure that is
larger by a logarithmic factor than 
a single distinct counter and supports the approximations of
{\em all} $\overline{\widetilde{\Cap}}$ statistics.
Finally, we consider statistics in
$\overline{{\Cap}}$ that are not in $\overline{\softCap}$ and show how
to approximate them within small relative errors (12\%) using
differences of approximate
$\overline{\softCap}$ statistics.  

 Our main technical tool is a novel
 framework, illustrated in Figure~\ref{framework:fig},
 that reduces the approximation of the more general
 statistics to distinct-count statistics.  At the core we
 specify randomized functions $M$ that 
map data elements of the form $e=(\text{\em{e.key}},\text{\em{e.value}})$ to
sets of {\em output elements}.  Each output element $e'\in M(e)$
contains an
{\em output key} ({\em outkey}) $e'.key$ (which generally is from a
different domain than the input keys)  and an optional value $e'.value
\geq 0$.  For a set of data elements $W$, we obtain a
corresponding multiset of output elements $$E=M(W)= \bigcup_{e\in W} M(e)\ .$$
The mapping functions are crafted so that
the approximate statistics of
the set of data elements $W$ can be obtained from
approximations of other statistics of the output
elements $E$.  In particular, if we have a composable sketch for
the output statistics,  we obtain a composable sketch for the target
statistics.
Note that our mapping functions are randomized, therefore the set $E$
is a random variable and so is any (exact) statistics on $E$.  We will
use approximate statistics on $E$.

  The output statistics we use are the distinct count
  $\dCount(E)$, which allows us to leverage 
approximate distinct counters as black boxes, and the
more general 
{\em max-distinct} statistics $\MdCount(E)$, defined as
the sum over distinct keys of the maximum value of an element with the
key: 
\begin{eqnarray} \label{mxdistinct:eq}
\lefteqn{\MdCount(E) = \sum_x m_x\ ,}  \label{mxdistinct:eq}\\
&& \text{ where } m_x\equiv\max_{e\in E \mid \text{\em{e.key}}=x}
\text{\em{e.value}}\ ,\nonumber
\end{eqnarray}
 which also can be sketched in double logarithmic size.
Note that when all elements have value $1$, $\MdCount(E) = \dCount(E)$.

For multi-objective approximations we use {\em all-threshold}
sketches that allows us to recover, for any threshold $t>0$,  an
approximation of  
\begin{equation} \label{tdistinct:def}
\TdCount_t (E)=\dCount \{e\in E \mid \text{\em{e.value}} \leq t \} ,
\end{equation}
 which is the number of distinct keys that appear in at least one
element $e\in E$ with value $\text{\em{e.value}} \leq t$.
The size of the all-threshold sketch is
larger by only a logarithmic factor than the basic distinct count
sketch.
We refer to the value of the output statistics on the output elements
as a {\em measurement} of $W$.  When a sketch is applied to approximate the value of output statistics, we refer to the result as
an {\em approximate measurement} of $W$.


\begin{figure}
\center
\includegraphics[width=0.45\textwidth]{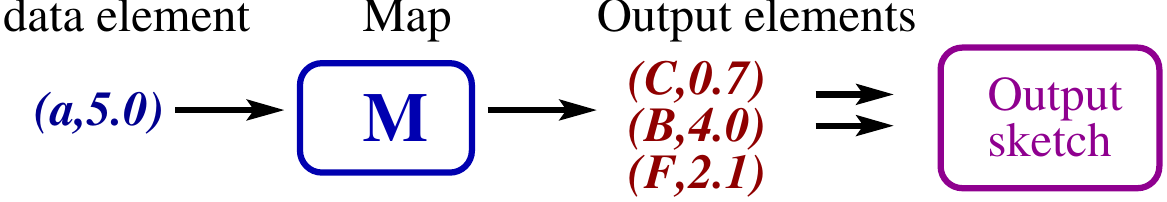}
\caption{{\small Element processing framework}}
\label{framework:fig}
\end{figure}

\ignore{
\begin{algorithm}[h]\caption{Base framework\label{framework:alg}}
 {\small
\KwIn{Data elements $W$, function $f \in \overline{\Cap} $}
\KwOut{Estimate of $f(W)$}
Specify (tailored to $f$):
\begin{itemize}
  \item
    Mapping functions $\{M_i\}$ of data elements to sets of output elements
  \item  
    An estimator $\hat{f}:R^{|\{M_i\}|}$
\end{itemize}    
\ForEach{$i$}{$\hat{C}_i \gets$ approximate (max-)distinct count of $M_i(W)$}\;
\Return $\hat{f}(\{\hat{C}_i\})$
}
\end{algorithm}
}

The paper is organized as follows.
In Section~\ref{cLaplace:sec} we define the {\em complement Laplace transform}
$\LapM[W](t)$  of the frequency
distribution $W$, which is its distinct count minus its Laplace
transform at $t$.  We have the relation
$$T \LapM[W](1/T) = \softCap_T(W)\ ,$$ that
is, the transform at $1/T$ multiplied by $T$ is the $\softCap_T$ statistics of
the data.
In Section~\ref{approxcL:sec} we define a mapping function for any
$t>0$,  so that $\LapM[W](t)$, and hence  $\softCap_{1/t}$-statistics, is
approximated by the respective $\dCount$ measurement.
We refer to this as a
measurement of $\LapM[W]$ at point $t$.

In Section~\ref{softspan:sec} we 
consider the span $\overline{\softCap}$ of soft capping statistics,
that is, all $f$ of the form \eqref{softspanf:eq}.
Equivalently,  $a(t)$ is the inverse $\LapM$ transform of $f$.  
We derive the explicit form of the inverse transform of
all frequency moments with $p\in(0,1)$ and logarithms.
 The statistics $f(W)$ for  $f\in \overline{\softCap}$ can thus be expressed as 
$$f(W) = \LapM[W][a] \equiv \int_0^\infty a(t) \LapM[W](t) dt\ .$$ 
This suggests that we can approximate $f(W)$ using multiple approximate point ($\dCount$)
measurements. In section~\ref{weightedDC:sec} we show
that a single
$\MdCount$ measurement suffices:
We present
 element mapping functions (tailored to $f$) such that the
 $\MdCount$ statistics on output elements approximates $f(W)= \LapM[W][a]$.
 We refer to this statistics as a {\em combination} measurement
 of $\LapM[W]$ using $a$. A $\MdCount$ sketch of the output element gives us
  an approximation of  combination measurement which approximates $f(W)$.
Finally, we will review the design of HyperLoglog-like $\MdCount$ sketches.

In Section~\ref{allt:sec} we consider the multi-objective setting, that is, a single sketch from which we can approximate all $\overline{\softCap}$ statistics.
We define a mapping function such that for all $t>0$, $\TdCount_t (E)$ is equivalent to a point measurement of $\LapM[W]$ at $t$.
\ignore{
{\em full-range $\LapM[W]$
measurement,} which encodes point measurements of 
$\LapM[W](t)$ at all points $t$.  This measurement provides us with
simultaneous approximations for all functions in $\overline{\softCap}$.
We show how to compute and represent an approximate full-range measurement
within a logarithmic overhead over that of a single approximate point
measurement using an all-threshold sketch.

To do so, we define
the {\em all-threshold} count  of a set of data elements $E$ that
are key and value pairs.  The count $\TdCount_t (E)$ 
for $t > 0$ is the number of distinct keys that appear in at least one
element $e$ with value $\text{\em{e.value}} \leq t$.
The full-range measurement is defined using our framework where
output elements have values and are processed by an all-threshold
distinct counter.
}
The output elements are processed by an
{\em all-threshold distinct count} sketches, which
can be interpreted as all-distance sketches
\cite{ECohen6f,ECohenADS:TKDE2015} and inherit their properties -- In
particular, the total structure size has logarithmic overhead over a
single distinct counter.  The all-threshold sketch
allows us to obtain approximate point measurements for any $t$ and combination measurement for any $a$.

In Section~\ref{hardcappingtransform:sec} we consider statistics in
$\overline{\Cap}$ that may not be  in $\overline{\softCap}$.
We characterize $\overline{\Cap}$ as the set of all concave sublinear
functions and derive expressions for the {\em capping 
  transform} which transforms $f\in \overline{\Cap}$ to the 
coefficients of the corresponding nonnegative linear combination of 
capping functions.  
We then consider sketching these statistics $f(W)$ using
approximate {\em signed} inverse $\LapM$
transform of the function $f$.  We use
separate combinations
measurements of the positive and negative components for the
approximation.    We show that $\Cap_1$ is the ``hardest''
function in that class in the sense that 
any approximate inverse transform for 
the function $\Cap_1(x) = \min\{1,x\}$ can be extended (while
retaining sketchability and approximation quality) to
any statistics $f\in \overline{\Cap}$,  using
the capping transform of $f$.  We then derive some approximate
transforms for $\Cap_1(x)$, and hence for any $\overline{\Cap}$
statistics  that achieve maximum relative error of $12\%$.

\ignore{

In Section~\ref{useLapM:sec}  we consider
statistics that are not in $\overline{\softCap}$ using
approximate $\LapM$
measurements (point, full-range, or combination).
 We identify conditions under
which we can obtain good approximations using signed approximate inverse 
transforms.

In Section~\ref{hardcap:sec} we derive concrete approximations for statistics 
in $\overline{\Cap}$ that are not in $\overline{\softCap}$ using approximate 
$\LapM[W]$ measurements. 
 We recall (see Equation \ref{softcap:eq}) that for all $t$ and $w$,
 $\softCap_t$ itself is an approximation of $\Cap_t$ that has relative error 
 that is at most $1/e$.
We obtain better approximations  by searching for appropriate signed approximate
inverse transforms.  We identify signed coefficients $a(t)$ that 
achieve maximum error of 12\%
and preserves the desirable property that the error vanishes for $w$ that is 
much small or much larger than $1/t$.  The approximation of $\Cap_t$
functions uses three
point or two combination measurements.  Using the capping transform,
we can extend the same approximation guarantees to any $f\in
\overline{\Cap}$ using two combination measurements.
Finally, our formulation can be used to search for tighter approximations.
}
\ignore{
We also propose a methodology for searching for tighter approximations
using more points or combination measurements.
 We note that any improvement on the relative error we can obtain for
 ``hard'' capping functions carries over to
all statistics $f\in \overline{\Cap}$: Roughly, we 
compute the capping 
transform of $f$, apply the approximation to obtain a combination of soft capping 
functions, and approximate the
latter using a full-range or a combination measurement.
}

\notinproc{
Section~\ref{experiments:sec} shows some experiments that 
mainly aimed at demonstrating the ease and effectiveness of our sketches.

Appendix sections \ref{bintrans:sec}--\ref{weighted:sec} briefly present some
 extensions.
In Section~\ref{bintrans:sec} we define a related 
discrete transform which we call the {\em Bin transform}.  A variant of the 
Bin transform was proposed in \cite{freqCap:KDD2015}.  We show how it 
can be approximated using our framework. 
Sections~\ref{decay:sec} and~\ref{weighted:sec}  discuss extensions to
time-decayed statistics and weighted keys.
We conclude in Section~\ref{conclu:sec} with future directions and
open problems.
}
\onlyinproc{
We conclude in Section \ref{conclu:sec}.  Due to page limitations,
many proofs and details and also some experimental results are omitted from this submission.  A full
version can be found in \url{https://arxiv.org/abs/1607.06517}.
}

\section{The Laplace$^C$  transform} \label{cLaplace:sec}
The {\em complement Laplace (Laplace$^c$) transform} $\LapM[W](t)$ is parametrized by $t>0$ and
defined as
\begin{eqnarray}
\lefteqn{\LapM[W](t) \equiv  \int_0^\infty W(w) (1-\exp(-wt))dw } \nonumber\\
&=&  \int_0^\infty W(w)dw-\Laplace[W(w)](t) \label{termsL:eq}\ .
\end{eqnarray}
See Figure~\ref{example:fig} for an illustration of $LapM[W](t)$  for a toy 
distribution $W$. 
The transform has the following relation to 
soft capping statistics:
\begin{eqnarray}\label{laplacecap:eq}
\softCap_T(W) &=& T \LapM[W](1/T) 
\end{eqnarray}

The first term in \eqref{termsL:eq}, $\int_0^\infty W(w)dw \equiv \dCount(W)$, is the ``distinct count''  and the
second term $\Laplace[W(w)]$ is the Laplace transform of our (scaled) frequency 
distribution $W$. Hence the name {\em complement Laplace} transform.
Note that $\LapM[W](t)$ is non-decreasing with 
  $t$. At the limit when $t$ increases, the second term
  vanishes and 
\begin{equation} \label{hight:eq}
\lim_{t \rightarrow \infty} \LapM[W](+\infty) =
  \int_0^\infty W(w)dw = \dCount(W)
\end{equation}
 is the number of 
  distinct keys in $W$.   At the limit as $t$ decreases
\begin{equation} \label{lowt:eq}
\lim_{ t \rightarrow 0^+} \frac{1}{t}\LapM[W](t)
  = \int_0^\infty W(w) w dt = t \SUM(W)\ ,
\end{equation}
where $\SUM(W) = \sum_{e\in W} \text{\em{e.value}} = \sum_x w_x$ is the sum of the
weights of keys.  More precisely:
\begin{lemma} \label{relevantrange:lemma}
For $t\leq \frac{\sqrt{\epsilon}}{\max_x w_x}$ and for
$t\geq \frac{-\ln \epsilon}{\min_x w_x}$, the transform is approximated
within a relative error of at most $\epsilon$ by the respective limits
\eqref{lowt:eq} and~\eqref{hight:eq}.
\end{lemma}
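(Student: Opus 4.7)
The plan is to rewrite the transform as a pointwise sum over keys, $\LapM[W](t) = \sum_x (1 - e^{-w_x t})$, using the identification of $W$ with the (scaled) distribution of key weights. This reduces both directions of the lemma to bounding a simple univariate function of $tw_x$ via elementary Taylor-type inequalities on $1 - e^{-u}$, with the two regimes corresponding to the two linear bounds on $1 - e^{-u}$ near $\infty$ and near $0$.

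For the large-$t$ regime I would compute
\[
\dCount(W) - \LapM[W](t) = \sum_x e^{-w_x t} \leq \dCount(W)\, e^{-t\min_x w_x},
\]
so the relative error with respect to the limit \eqref{hight:eq} is at most $e^{-t\min_x w_x}$. Requiring this to be $\leq \epsilon$ yields exactly $t \geq -\ln\epsilon / \min_x w_x$, matching the stated threshold.

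For the small-$t$ regime I would use the one-sided inequalities $0 \leq u - (1-e^{-u}) \leq u^2/2$ for $u \geq 0$, which follow from $g(u)=u-(1-e^{-u})$ having $g(0)=0$ and $g'(u)=1-e^{-u}\in[0,u]$. Substituting $u = tw_x$, summing over keys, and bounding $\sum_x w_x^2 \leq (\max_x w_x)\,\SUM(W)$ gives
\[
0 \leq t\,\SUM(W) - \LapM[W](t) \leq \tfrac{t\,\max_x w_x}{2}\cdot t\,\SUM(W).
\]
Dividing by $t\,\SUM(W)$ bounds the relative error with respect to the limit \eqref{lowt:eq} by $\tfrac12 t\,\max_x w_x$. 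Under the stated $t \leq \sqrt{\epsilon}/\max_x w_x$ this evaluates to $\tfrac12\sqrt{\epsilon}$, which matches the claim in the relevant range (and in fact the same argument delivers the sharper threshold $t \leq 2\epsilon/\max_x w_x$ uniformly in $\epsilon$).

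The analysis is elementary Taylor bookkeeping and I do not foresee a genuine obstacle. The only point requiring care is the choice of denominator for ``relative error''; I will consistently normalize by the limiting value itself, which is $t$-independent and makes the two bounds symmetric and clean.
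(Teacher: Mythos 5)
Your proof takes essentially the same route as the paper's: the large-$t$ claim via $\sum_x e^{-w_x t} \leq \dCount(W)\,e^{-t\min_x w_x}$, and the small-$t$ claim via the per-key Maclaurin/Taylor bound $0 \leq u-(1-e^{-u}) \leq u^2/2$ with $u=tw_x$, which you simply carry out more carefully than the paper's two-line sketch (explicit summation over keys and the bound $\sum_x w_x^2 \leq \MAX(W)\,\SUM(W)$). The only caveat is that at the stated threshold $t\leq\sqrt{\epsilon}/\MAX(W)$ your (correct) relative-error bound is $\tfrac12\sqrt{\epsilon}$ rather than $\epsilon$, so it does not literally "match the claim" when $\epsilon<1/4$ --- but this looseness is present in the paper's own proof as well, which only bounds the per-key deviation by $(wt)^2/2\leq\epsilon$, and your closing remark that $t\leq 2\epsilon/\MAX(W)$ is what the literal statement requires is exactly the right fix.
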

\begin{proof}
For the first claim, note that $wt\leq \sqrt{\epsilon}$.
Hence, using the Maclaurin expansion, $|1-e^{-wt}-wt| \approx
  (wt)^2/2 \leq \epsilon$.
For the second claim, the relative error is $\exp(-wt) \leq \epsilon$.
\end{proof}
The Lemma implies that the fine structure of $W$ is captured by a
restricted  ``relevant'' range of $t$ values and is well approximated 
outside this range by the basic (and composably sketchable) $\dCount$ and $\SUM$ statistics.
The statistics $\dCount(W)$ is approximated by an
off-the-shelf approximate distinct counter applied to data elements.
The exact $\SUM(W)$ is straightforward to compute 
composably with a single counter of size $O(\log \SUM(W))$ (assuming integral values). 
A classic algorithm 
by Morris~\cite{Morris77}  (see~\cite{ECohenADS:TKDE2015} for a 
composable version that can handle varying weights) uses 
sketches of size $O(\epsilon^{-2}
+ \log\log (\SUM(W))$.

\begin{figure}
\center 
\includegraphics[width=0.42\textwidth]{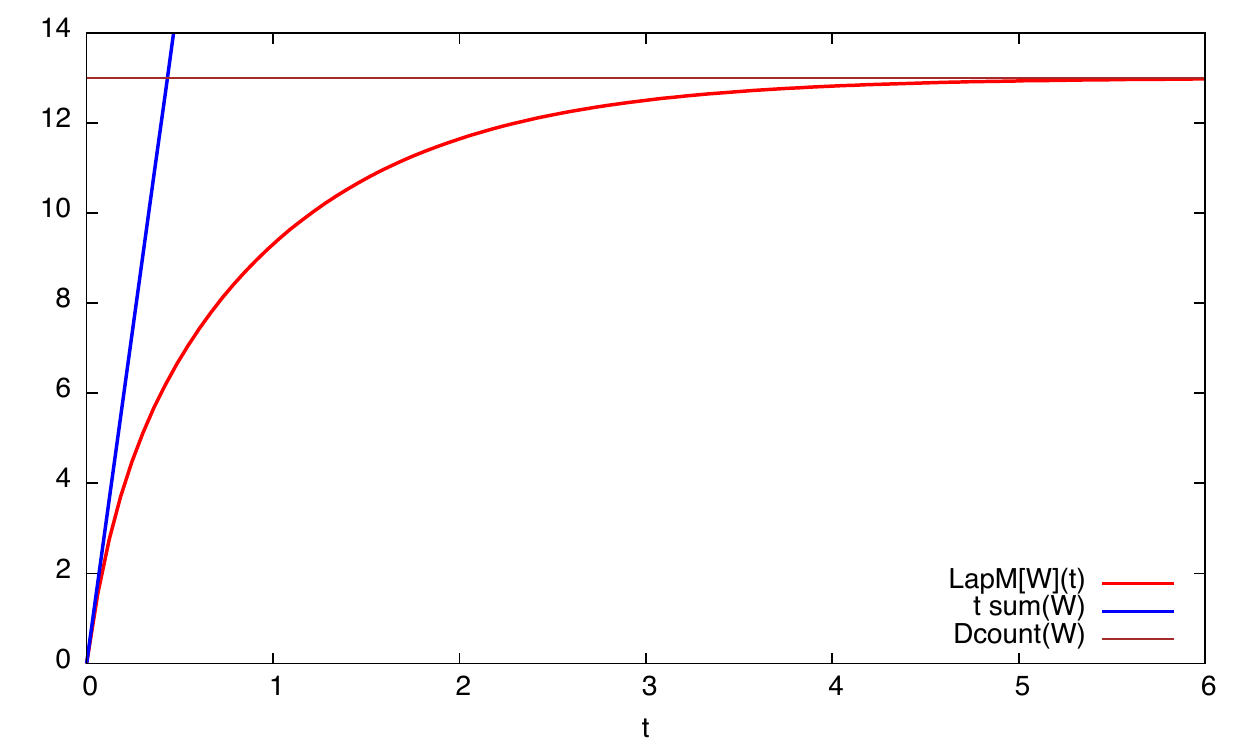}
\caption{Data $W$ with 10 keys with $w_x=1$, 2 keys with $w_x=5$, and one
  key with $w_x=10$.   The distinct count (number of keys) is
  $\dCount(W) =13$ and
  the sum is $\SUM(W) = 30$.
We have $W(w) = 10\delta(w-1)+2\delta(w-5)+
  \delta(w-10)$, where $\delta$ is Dirac Delta function.   The transform (plotted) is
$\LapM[W](t)=10(1-e^{-t})+2(1-e^{-5t})+ (1-e^{-10t}) = 13-10e^{-t}-2e^{-5t}-e^{-10t}$. The plot
shows the asymptotes $t \SUM(W)$ for small $t$ and $\dCount(W)$ for large $t$.
\label{example:fig}}
\end{figure}

\section{Laplace$^c$ point measurements} \label{approxcL:sec}
We define a mapping function of elements such that
the expectation of the (scaled) distinct count of output elements is equal to
the Laplace$^c$ transform $\LapM[W](t)$ of $W$
  at $t$. We also establish concentration around the expectation.

The basic element mapping is provided
as Algorithm~\ref{elemmap1:alg}.  A more efficient variant that
performs computation proportional to the number of generated output
elements is provided in
\notinproc{Appendix~\ref{efficientpoint:sec}.}\onlyinproc{the full version.}

The mapping is
parametrized by $t$ and by an integer $r\geq 1$ and uses a set of
functions $H_i$ for $i\in [r]$.  
All we need to assume here is that
for all $i$ and keys $x$, $H_i(x)$ are
(nearly) unique.  This can be achieved by concatenating $x$ to a
string representation of $i$: $H_i(x)
\equiv x  \cdot \text{str}(i))$.  
To obtain output key  representation that is logarithmic in $r
\dCount(W)$, we can apply a random hash function to the concatenated string.
  An element $e$ is processed by drawing a set of $r$ independent
  exponential random variables $y_i \sim \Exp[\text{\em{e.value}}]$  with parameter $\text{\em{e.value}}$.  For each $i$
  such that $y_i<t$, the output key $H_i(\text{\em{e.key}})$ is created.  Note that the
  number of output keys returned is between 0 and $r$.
  \begin{algorithm2e}[h]\caption{$\protect\Okeys_{t,H}(e)$:
      Map input
      element to  outkeys \label{elemmap1:alg}}
\DontPrintSemicolon
    {\small
\KwIn{Element $e=(\text{\em{e.key}},\text{\em{e.value}})$, $t>0 $, integer $r \geq 1$, hash
  functions $H_i$ $i\in[r]$}
\KwOut{A  set \Okeys of at most $r$ outkeys}
$\Okeys \gets []$ \tcp*[f]{initialize}\;
\ForEach{$i\in [r]$}{$y_i \sim
\Exp[\text{\em{e.value}}]$ \tcp*[f]{independent exponentially distributed with parameter $\text{\em{e.value}}$}\;
\If{$y_i \leq t$}{$\Okeys.append(H_i(\text{\em{e.key}}))$ \tcp*[f]{Append $H_i(\text{\em{e.key}})$
    to list of output keys}\;
}
}
\Return $\Okeys$
}
\end{algorithm2e}
  
Our point measurement is
\begin{equation}
\widehat{\LapM}[W](t) = \frac{1}{r} \dCount\left( \bigcup_e
  \text{\sc{OutKeys}}_{t,H}(e) \right)\ ,
\end{equation} which is number of distinct output keys generated for all
data elements,
divided by $r$.
We now show that for any choice of $r\geq 1$,  $t$, and input data
$W$, the expectation of the measurement $\widehat{\LapM}[W](t)$
is equal to the value of the Laplace$^c$ transform of $W$ at $t$.
\begin{lemma}  \label{unbiasedpoint:lemma}
 $$\E[\widehat{\LapM}[W](t)] = \LapM[W](t)$$
\end{lemma}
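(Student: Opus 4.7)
The plan is to compute the expectation by linearity, reducing to the event that a specific output key $H_i(x)$ appears in $\bigcup_e \text{\sc{OutKeys}}_{t,H}(e)$. Since the hash values $H_i(x)$ are assumed to be (near-)unique across keys and indices, we may write
\begin{equation*}
\dCount\left(\bigcup_e \text{\sc{OutKeys}}_{t,H}(e)\right) = \sum_{x} \sum_{i=1}^{r} \mathbf{1}[H_i(x) \text{ is generated by some } e \in W] ,
\end{equation*}
where the outer sum ranges over distinct keys $x$ appearing in $W$.

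Next I would fix a key $x$ with weight $w_x = \sum_{e:\,\text{\em{e.key}}=x} \text{\em{e.value}}$ and an index $i\in[r]$, and determine the probability that $H_i(x)$ is \emph{not} output by any element with key $x$. For each such element $e$, the variable $y_i^{(e)}\sim \Exp[\text{\em{e.value}}]$ is drawn independently, and $H_i(x)$ is appended precisely when $y_i^{(e)} \leq t$. Thus the probability that element $e$ fails to emit $H_i(x)$ is $\exp(-\text{\em{e.value}}\cdot t)$. Independence across elements (the $y_i^{(e)}$ are drawn independently per element) gives
\begin{equation*}
\Pr[H_i(x) \text{ not generated}] = \prod_{e:\,\text{\em{e.key}}=x} e^{-\text{\em{e.value}}\, t} = e^{-w_x t} ,
\end{equation*}
so $\Pr[H_i(x) \text{ generated}] = 1 - e^{-w_x t}$. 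This is the one substantive step, and it is really just the product rule for independent exponentials (equivalently, that the minimum of independent exponentials is exponential with rate equal to the sum of rates).

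Finally, linearity of expectation and independence of the $r$ copies gives
\begin{equation*}
\E\left[\dCount\left(\bigcup_e \text{\sc{OutKeys}}_{t,H}(e)\right)\right] = r \sum_x \bigl(1 - e^{-w_x t}\bigr) = r \int_0^\infty W(w)\bigl(1 - e^{-wt}\bigr)\, dw = r\, \LapM[W](t) ,
\end{equation*}
where the middle equality rewrites the sum over distinct keys in terms of the distribution $W$ as in the paper's convention. Dividing by $r$ yields the claim. The main obstacle, to the extent there is one, is bookkeeping: ensuring that the (near-)uniqueness of the $H_i(x)$ justifies counting distinct outkeys as a sum of indicators, and that the draws of $y_i^{(e)}$ are genuinely independent across both $i$ and across elements $e$ sharing a key $x$, so that the product $e^{-w_x t}$ is valid.
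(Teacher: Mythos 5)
Your proposal is correct and follows essentially the same route as the paper's proof: decompose the distinct outkey count into indicators over (key, index) pairs, compute the per-key appearance probability $1-e^{-w_x t}$ (you via the product of exponential tails, the paper via the minimum of independent exponentials having rate $w_x$ --- the same fact, as you note), and conclude by linearity of expectation. No gaps.
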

\begin{proof}
The distinct count of outkeys $r\widehat{\LapM}[W](t)$
can be expressed as the sum   of $r n$ Poisson trials.  Each Poisson trial corresponds
to ``appearance at least once''
of the potential outkey $H_i(x)$ over the $n=\dCount(W)$ active input keys and $i\in [r]$.
For each $i\in [r]$ and key $a$, the  outkey $H_i(a)$ appears
if the minimum $\Exp[\text{\em{e.value}}]$ draw over 
   elements $e$ with key $a$ is at most $t$.  The minimum of these 
   exponential random variables is 
   exponentially distributed with parameter equal to their sum $w_a = \sum_{e \mid \text{\em{e.key}}=a}
   \text{\em{e.value}}$.
 This  distribution has density 
   function $w \exp(-w x)$.  Therefore, the probability of the event is
\begin{equation}
p(w_a,t) = \int_0^t w_a \exp(-w_a y) dy = 1-\exp(-w_a t)\ . 
\end{equation}
 It follows that the expected contribution of a key $a$ with weight $w_a$ to the sum
   measurement is $rp(w_a,t)$.  Therefore
the expected value of the measurement is
$$\E[\widehat{\LapM}[w](t)]  = \frac{1}{r} \int_0^\infty W(w)r p(w,t) dw \equiv \LapM[W](t) \ .$$
\end{proof}

Note that for $t = +\infty$, which corresponds to distinct counting, $p(w,t) = 1$, and the measurement is always equal to
its expectation $n$.
More generally, $p(w,t)<1$ and we can bound the relative error
by a straightforward
   application of Chernoff bound:
\begin{lemma} \label{chernoff:lemma}
For $\delta<1$, 
$$\Pr[\frac{|\widehat{\LapM}[W](t) - \LapM[W](t)|}{\LapM[W](t)} \geq
\delta] \leq 2 \exp(-r \delta^2 \LapM[W](t)/3)\ .$$
\end{lemma}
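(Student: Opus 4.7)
The plan is to apply a standard multiplicative Chernoff bound after verifying that the scaled measurement $r\widehat{\LapM}[W](t)$ is a sum of independent Bernoulli trials with the right expectation. The work in Lemma~\ref{unbiasedpoint:lemma} already gives us the structural decomposition we need: $r\widehat{\LapM}[W](t) = \sum_{i\in[r]}\sum_{a} X_{i,a}$, where $X_{i,a}\in\{0,1\}$ indicates that the outkey $H_i(a)$ is generated, and $\E[X_{i,a}] = p(w_a,t) = 1-e^{-w_a t}$.

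First, I would argue independence of the $X_{i,a}$. Across distinct keys $a$, the trials use disjoint sets of exponential random variables (one set per data element, and each element has a single key), hence are independent. Across distinct indices $i\in[r]$, the draws are independent by construction of Algorithm~\ref{elemmap1:alg}. Consequently $\{X_{i,a}\}_{i,a}$ is a family of independent Bernoulli random variables. Their sum $S := r\widehat{\LapM}[W](t)$ therefore satisfies the hypotheses of the multiplicative Chernoff bound, with
\begin{equation*}
\mu := \E[S] = \sum_{i\in[r]}\sum_a p(w_a,t) = r\int_0^\infty W(w)(1-e^{-wt})dw = r\LapM[W](t).
\end{equation*}

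Next, I would invoke the standard two-sided multiplicative Chernoff bound: for any $\delta\in(0,1)$,
\begin{equation*}
\Pr[|S-\mu|\geq \delta\mu] \leq 2\exp(-\delta^2\mu/3).
\end{equation*}
Substituting $\mu = r\LapM[W](t)$, dividing the deviation event through by $r$, and rewriting the relative error in terms of $\widehat{\LapM}[W](t)$ versus $\LapM[W](t)$ gives exactly the claimed inequality.

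There is no real obstacle here: the only subtlety worth spelling out is the independence justification, since a careless reading might worry about correlations induced by multiple elements sharing a key. That worry dissolves once one observes that each data element is processed with its own fresh draws and the Bernoulli trial for outkey $H_i(a)$ depends only on the exponentials associated with elements of key $a$ for a fixed index $i$, keeping the family $\{X_{i,a}\}$ jointly independent.
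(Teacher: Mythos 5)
Your proposal is correct and follows essentially the same route as the paper: the paper's proof likewise views $r\widehat{\LapM}[W](t)$ as the sum of the $rn$ independent Poisson (Bernoulli) trials already identified in Lemma~\ref{unbiasedpoint:lemma} and applies the standard two-sided multiplicative Chernoff bound to its expectation $r\LapM[W](t)$. Your added justification of joint independence of the trials $X_{i,a}$ is a fine (and harmless) elaboration of what the paper leaves implicit.
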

\notinproc{
\begin{proof}
We apply Chernoff bounds to bound the deviation of the sum $r
\widehat{\LapM}[W](t)$ of our $rn$ independent Poisson
random variables from its expectation $r \LapM[W](t)$.
\begin{eqnarray*}
\lefteqn{\Pr[\frac{|\widehat{\LapM}[W](t) - \LapM[w](t)|}{\LapM[W](t)} \geq
\delta]  }\\
&=&\Pr[|r\widehat{\LapM}[W](t) - r\LapM[W](t)| \geq \delta r\LapM[W](t)]\ .\nonumber
\end{eqnarray*}
\end{proof}
}


The outkeys $E$ are processed by an {\em
  approximate} distinct counter and our final approximate
measurement
\begin{equation} \label{approxpointm:eq}
  \widehat{\widehat{\LapM}}[W](t) = \frac{1}{r}\widehat{\dCount}(E)
\end{equation}
is equal to the
approximate count of distinct output keys divided by $r$.
Since there are at most $r\dCount(W)$ distinct output keys, the
sketch size needed for CV $\epsilon$ is 
$O(\epsilon^{-2}+\log\log(r \dCount(W))) = O(\epsilon^{-2}+ \log\log
\dCount(W))$.
Note that even a very large choice of $r$ will not significantly
increase the sketch size.  A large $r$, however, can affect the
element mapping computation when there are many generated output
elements.

We now consider the choice of $r$ that suffices for our
quality guarantees.
From Lemma~\ref{chernoff:lemma},
when 
\begin{equation} \label{rcond:eq}
  r \LapM[W](t) \geq 3\epsilon^{-2}
  \end{equation}
we have
CV $\epsilon$ with tight concentration for
$\widehat{\LapM}[W](t)$ as an approximation of $\LapM[W](t)$.
Note that $r \LapM[W](t)$ is the expected number of distinct output
elements.
We can combine the contributions to the error due to the measurement
and its approximation, noting that the two are independent, 
 and obtain that when \eqref{rcond:eq} holds, 
the error of 
$\widehat{\widehat{\LapM}}[W](t)$ as an approximation of $\LapM[W](t)$ has
CV $\sqrt{2}\epsilon$ and tight concentration.

 We now provide a lower bound on $\LapM[W](t)$ as a function of $t$.
 \begin{lemma} \label{oversqrt:lemma}
$$\LapM[W](t) \geq \frac{e-1}{e}\SUM(W) \min\{\frac{1}{\MAX(W)},t\}\ ,$$
where $\MAX(W) \equiv \max_x w_x$.
\end{lemma}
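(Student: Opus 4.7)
The plan is to start from the summation form
\[
\LapM[W](t) \;=\; \sum_x \bigl(1-e^{-w_x t}\bigr),
\]
which follows from the definition applied to the (scaled) measure $W$, and compare term by term against $w_x$. The workhorse inequality is the elementary bound
\[
1 - e^{-y} \;\geq\; \frac{e-1}{e}\, y \qquad \text{for all } y \in [0,1],
\]
which holds by concavity of $1-e^{-y}$: the graph lies above the chord from $(0,0)$ to $(1, 1-1/e)$ on the interval $[0,1]$.

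I would then split on the value of $t$ relative to $1/\MAX(W)$.

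\emph{Case 1: $t \leq 1/\MAX(W)$.} Then for every key $x$ we have $w_x t \leq 1$, so the inequality above applies with $y = w_x t$, giving $1 - e^{-w_x t} \geq \tfrac{e-1}{e} w_x t$. Summing over $x$ yields
\[
\LapM[W](t) \;\geq\; \frac{e-1}{e}\, t \sum_x w_x \;=\; \frac{e-1}{e}\, \SUM(W)\, t,
\]
which matches the claimed bound since $\min\{1/\MAX(W), t\} = t$ in this case.

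\emph{Case 2: $t > 1/\MAX(W)$.} Since $\LapM[W](t)$ is nondecreasing in $t$ (each term $1-e^{-w_x t}$ is nondecreasing), it suffices to evaluate at $t_0 = 1/\MAX(W)$. At this value, $w_x t_0 = w_x/\MAX(W) \in [0,1]$ for every $x$, so applying the same concavity bound and summing gives
\[
\LapM[W](t) \;\geq\; \LapM[W](t_0) \;\geq\; \frac{e-1}{e}\,\frac{\SUM(W)}{\MAX(W)},
\]
again matching the claim because $\min\{1/\MAX(W), t\} = 1/\MAX(W)$ here.

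There is no real obstacle; the only thing to get right is identifying the correct one-line concavity inequality and noting that $\LapM[W](t)$ is monotone in $t$ so that Case 2 reduces to Case 1 at the boundary point. Both cases produce the same constant $\tfrac{e-1}{e}$, which is exactly the factor in the statement (and is the same constant appearing in~\eqref{softcap:eq}, as expected from the relation $T\LapM[W](1/T) = \softCap_T(W)$).
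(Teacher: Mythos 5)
Your proof is correct and takes essentially the same route as the paper: the paper's one-line argument rests on the same concavity/chord bound, stated there as $\LapM[W](t)\geq (1-\tfrac{1}{e})\sum_x \min\{1,t w_x\}$, followed by the same case split at $t=1/\MAX(W)$. Your appeal to monotonicity of $\LapM[W](t)$ in Case~2 is only a cosmetic repackaging of how the paper handles the keys with $t w_x\geq 1$.
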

\begin{proof}
  From definition, 
  {\small 
\begin{equation} \label{trivbound:eq}
\LapM[W](t) \geq (1-\frac{1}{e}) \sum_x \min\{1,t w_x\}
\end{equation}
  }
  When $t\leq 1/\MAX(W)$ we have  $\sum_x \min\{1,t w_x\}= t \SUM(W)$.
When $t\geq 1/\MAX(W)$ we have $\sum_x \min\{1,t w_x\}\geq \SUM(W)$.
\end{proof}
The lemma 
implies that when $t\geq \sqrt{\epsilon}/\MAX(W)$,
\begin{equation} \label{rseamless:eq}
  r=\frac{e}{e-1}\epsilon^{-2.5}\frac{\MAX(W)}{\SUM(W)}\leq \frac{e}{e-1}\epsilon^{-2.5}
\end{equation}
always suffices to ensure that 
\eqref{rcond:eq} holds.  

Since we do not know $\MAX(W)$ or $\SUM(W)$ in advance, we 
use the following strategy.
Our approximate point measurement algorithm
computes both an approximate sum $\widehat{\SUM}(W)$ and approximate
count of output elements $\widehat{\dCount(E)}$ generated by
Algorithm~\ref{elemmap1:alg} with $r$ as in \eqref{rseamless:eq}.
If $\widehat{\dCount(E)} < 3\epsilon^{-2}$, we return 
$t \widehat{\SUM}(W)$ and otherwise return \eqref{approxpointm:eq}.

 We comment here that the choice of $r$ in \eqref{rseamless:eq}
 provides seamless worst-case quality guarantees for any
 $t$ and distribution $W$.  In particular, having $t$ as small as
 $\sqrt{\epsilon}/\MAX(W)$ and at the same time having  $\SUM(W) = O(\MAX(W))$.
In practice, we may want to use a smaller value of $r$:
While $r$ 
does not affect sketch size, it does affect element processing computation.  The basic algorithm is linear in $r$.
In \onlyinproc{the full version}\notinproc{Appendix~\ref{efficientpoint:sec}} we provide a more efficient algorithm with 
computation that is linear in the
number of generated output keys  $O(r (1-\exp(-t \text{\em{e.value}} ))$.
We note that when $\SUM(W) \geq \epsilon^{-2.5} \MAX(W)$ then $r=1$ suffices. 


\ignore{
The  ``right'' choice of $r$ for a given
 $t$ is the smallest one that ensures that \eqref{rcond:eq} holds.
 That is, $r=3\epsilon^{-2}/\LapM[W](t)$.  In this case the
 number of generated distinct output elements is at most
 $r\LapM[w](t)= 3\epsilon^{-2}$ and therefore the number of output
 elements generated for each element processing is at most $3\epsilon^{-2}$.

 With that value, the maximum expected
number of output elements generated per input element is
$O(\epsilon^{-2})$ and can generally be much smaller.  

 Note that the purpose of  choosing $r=\epsilon^{-2.5}$
is to provide quality guarantees that {\em seamlessly} hold for small
$t$, for very small data sets, and for
highly skewed weight distributions that are dominated by very few 
keys.  If we forego making special provisions in the algorithm for 
such cases, and assume that  the statistics is not dominated by very few 
keys and the data set is not tiny, then $\LapM[W](t) \gg
3\epsilon^{-2}$ when $t> \sqrt{\epsilon}/\max_x w_x$ and
\eqref{rcond:eq} holds even with $r=1$.
}

\ignore{
\subsection{The relevant range of the transform} \label{rchoice:sec}

When the number of output elements exceeds that, it implies that we can
use $r=\epsilon^{-2}$.

 \begin{lemma}  \label{smallr:lemma}
For $t\geq \frac{1}{\max_x w_x}$, $r \geq c \epsilon^{-2}\ ,$ and
$\delta<1$, 
$$\Pr[\frac{|\widehat{\LapM}[W](t) - \LapM[W](t)|}{\LapM[W](t)} \geq 
\epsilon] \leq 2 \exp(-c /5)\ .$$
 \end{lemma}
\begin{proof}
  From definition,
  {\small
\begin{equation} \label{trivbound:eq}
\LapM[W](t) \geq (1-1/e) \sum_x \min\{1,t w_x\}
\geq (1-1/e) \min\{1, t \max_x w_x\}\ .
\end{equation}
}
When $t\geq \frac{1}{\max_x w_x}$,
we have
$$\LapM[W](t) \geq 1-1/e\ .$$
The proof follows using
Lemma~\ref{chernoff:lemma} and $r  \geq c  \epsilon^{-2}/\LapM[W](t))
\geq c \epsilon^{-2} (1-1/e)$.
\end{proof}
To apply this seamlessly for point measurements, we make the following
small modification:  We compute the (exact or approximate) sum
$\SUM(W)$ over data elements (see discussion in Section~\ref{cLaplace:sec}.
We then apply our framework with
$r=O(\epsilon^{-2})$ (regardless of $t$) but
when the distinct count of output elements is smaller than
$c\epsilon^{-2}$, we return instead the approximation $t \SUM(W)$.
}

\section{The soft capping span} \label{softspan:sec}

  The {\em soft capping span} $\overline{\softCap}$ contains all
  functions $f$ that can be expressed as nonnegative linear combinations
  of $\softCap_T$  functions.
Equivalently, for some $a(t)\geq 0$,
\begin{equation} \label{idealinverse:eq}
f(w) = \LapM[a](w) = \int_0^\infty  a(t) (1-e^{-wt})dt \ . 
\end{equation}
Note that  $a(t)$ is 
the inverse Laplace$^c$ transform of $f(w)$:
\begin{equation} \label{invtransform}
a(t) = (\LapM)^{-1}[f(w)](t)\ . 
\end{equation}
The following is immediate from the definition \eqref{idealinverse:eq}
\begin{lemma}  \label{invtranscond:lemma}
For any well defined $f(w)$, $\int_{0}^1 a(t) t dt<\infty$ and 
$\int_1^\infty a(t) dt < \infty$. 
\end{lemma}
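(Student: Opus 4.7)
The plan is to extract both integrability conditions directly from the assumption that $f(w)$ is well defined, i.e. that the integral \eqref{idealinverse:eq} converges (is finite) for some $w>0$. Without loss of generality, I will work with $w=1$; the argument is identical for any other fixed positive value. The key observation is that $1-e^{-wt}$ behaves like $wt$ near $t=0$ and like a positive constant as $t\to\infty$, so a single finite evaluation of $f$ pins down the tail behavior of $a(t)$ in both regimes.

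\textbf{Step 1 (integrability near zero).} Since $1-e^{-x}$ is concave on $[0,1]$, passes through the origin, and equals $1-e^{-1}>1/2$ at $x=1$, concavity yields $1-e^{-x}\geq (1-e^{-1})\,x\geq x/2$ for every $x\in[0,1]$. Applying this with $x=t$ for $t\in(0,1]$ and using $a(t)\geq 0$,
\begin{equation*}
\int_0^1 a(t)\,\tfrac{t}{2}\,dt \;\leq\; \int_0^1 a(t)\bigl(1-e^{-t}\bigr)\,dt \;\leq\; \int_0^\infty a(t)\bigl(1-e^{-t}\bigr)\,dt \;=\; f(1) \;<\;\infty,
\end{equation*}
which gives $\int_0^1 a(t)\,t\,dt<\infty$.

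\textbf{Step 2 (integrability at infinity).} For $t\geq 1$, monotonicity of $1-e^{-t}$ in $t$ gives $1-e^{-t}\geq 1-e^{-1}>0$. Hence
\begin{equation*}
(1-e^{-1})\int_1^\infty a(t)\,dt \;\leq\; \int_1^\infty a(t)\bigl(1-e^{-t}\bigr)\,dt \;\leq\; f(1) \;<\;\infty,
\end{equation*}
which gives $\int_1^\infty a(t)\,dt<\infty$.

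Both inequalities together are exactly the two conditions claimed. I do not expect any real obstacle here; the only subtle point is to make explicit what ``well defined'' means for $f$, namely that \eqref{idealinverse:eq} is finite at some (equivalently, every) positive $w$, and to observe that one such evaluation is enough to control $a$ both near $0$ and near $\infty$ because $1-e^{-wt}$ has matching lower bounds $\Theta(wt)$ and $\Theta(1)$ in those two regimes.
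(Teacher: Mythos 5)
Your proof is correct and is precisely the argument the paper leaves implicit when it calls the lemma ``immediate from the definition \eqref{idealinverse:eq}'': evaluate $f$ at a single positive point (here $w=1$) and use the lower bounds $1-e^{-t}\geq(1-e^{-1})t$ on $(0,1]$ and $1-e^{-t}\geq 1-e^{-1}$ on $[1,\infty)$, together with $a(t)\geq 0$, to control both integrals. Your clarification that ``well defined'' means finiteness of \eqref{idealinverse:eq} at some (equivalently every) positive $w$ is the right reading, so nothing further is needed.
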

 The soft capping span includes basic functions such as 
$\softCap_T(w)$ for all $T>0$, $w^p$ for all $p\in (0,1)$, and 
$\ln(1+w)$.  Explicit expressions for the inverse $\LapM$ transforms of these 
functions are provided in Table~\ref{softtran:table}.  The table also 
lists other expressions that we will use for sketching the 
statistics. 
\notinproc{The derivations 
are established in Appendix~\ref{invlapc:sec}.}

\begin{table*}\caption{Inverse $\LapM$ transform of basic functions\label{softtran:table}}
\center 
\begin{tabular}{l|l||l|l}
\hline 
$f(w)$ & $a(t)=(\LapM)^{-1}[f(w)](t)$ & $\int_\tau^\infty a(t) dt$ & $\int_0^\tau a(t) t dt$  \\
\hline 
$\softCap_T(w)$  &  $T \delta_{1/T}(t)$ & $T$, when $y\leq 1/T$; $0$
                                          otherwise & $0$ when $\tau<1/T$\\
$w^p$ ($p\in (0,1)$) & $\frac{p}{\Gamma(1-p)} t^{-(1+p)}$ &
                                                            $\frac{1}{\tau^p \Gamma(1-p)}$ &$ \frac{p}{(1-p)\Gamma(1-p)}\tau^{1-p}$ \\
$\sqrt{w}$ & $\frac{1}{2\sqrt{\pi}} t^{-1.5}$ & $\frac{1}{\sqrt{\pi \tau}}$ & $\frac{\tau^{0.5}}{\sqrt{\pi}}$ \\
$\log(1+w)$ &  $\frac{e^{-t}}{t}$ & $\int_\tau^\infty \frac{e^{-t}}{t} dt
                                    = -\tau\text{Ei}(-\tau)$ &  $1-e^{-\tau}$
\end{tabular}
\end{table*}

We can express $f(W)$ in terms of the inverse $\LapM$ transform $a(t)$ of
$f(w)$ and the transform $\LapM[W]$ of the frequencies:
\begin{eqnarray}
f(W) &=& \int_0^\infty f(w) W(w) dw \nonumber \\
&=& \int_0^\infty W(w) \LapM[a](w) dw \nonumber\\
&=& \int_0^\infty a(t) \int_0^\infty W(w) (1-e^{-wt}) dw dt \nonumber \\
&=& \int_0^\infty a(t) \LapM[W](t) dt \label{fromL:eq}\ . 
\end{eqnarray}
Alternatively, when the inverse transform has a discrete form, that is,
when we can express $f$ using
$\{a_t\}$ for $t\in Y$ as 
\begin{equation} \label{discrete:eq} 
f(w) = \sum_{t\in Y} a_t (1-e^{-wt}) \ ,
\end{equation}
 (Equivalently, $a(t)$ is a 
linear combination of Dirac delta functions at $t\in Y$). 
We can express $f(W)$ in terms of corresponding 
points of $\LapM[W]$:
$$f(W) = \sum_{t\in Y} a_t  \LapM[W](t)\ .$$

We will find it convenient to work with the notation:
\begin{equation} \label{comb:eq}
\LapM[W][a]_\tau^b \equiv \int_\tau^b a(t) \LapM[W](t) 
dt\ . 
\end{equation} 
When the subscript or superscript are omitted, we use the defaults $\tau=0$ and $b=\infty$.
With approximate $\LapM$ measurements we have
\begin{equation} \label{approxLapest:eq}
\widehat{\widehat{\LapM}}[W][a]_\tau^b  = \int_\tau^b a(t) \widehat{\widehat{\LapM}}[W](t) dt\ . 
\end{equation}
When $f(w)$ is such that the inverse $\LapM$ transform $a(t)$ is a
sum of Dirac delta functions,  we can approximate $f(W)$ using
corresponding  point measurements of $\LapM[W]$. In the next section we introduce combination $\LapM$
measurements which allow us to approximate $\LapM[W][a]$  for any
nonnegative continuous $a(t) \geq 0$.


\section{Combination $\LapM$  measurements} \label{weightedDC:sec}
In this section we show how to sketch $\LapM[W][a]_0^\infty$ for any
$a(t) \geq 0$ that satisfies the conditions in Lemma~\ref{invtranscond:lemma}.
Recall that his allows us to sketch any $f(W)\in \overline{\softCap}$   (see Section~\ref{softspan:sec}).

We define randomized mapping functions of elements, tailored to some
$a(t) \geq 0$ and $\tau>0$, such that the expectation of the (scaled) max-distinct
statistics of output elements is equal to 
$\LapM[W][a]_\tau^\infty$ (see notation \eqref{comb:eq}) we also
establish concentration around the expectation.  We then estimate the component
$\LapM[W][a]_0^\tau$ separately through $\widehat{\SUM}(W)$.

\subsection{Element mapping}

 Consider $a(t) \geq 0$. 
  Our element processing is a simple modification of the element processing
  Algorithm~\ref{elemmap1:alg}.  The algorithm inputs the function
  $a()$ (instead of $t$) and  returns output elements (outkey
  and value pairs) instead of only returning outkeys.  
Pseudocode is    provided as Algorithm~\ref{mxdistinct:algo}.
\begin{algorithm2e}\caption{\protect\Oelems$_{a,H}(e)$: Map of element
    $e$ to a set of output elements for a $\LapM[W][a]_\tau^\infty$ measurement \label{mxdistinct:algo}}
\KwIn{Element $e=(\text{\em{e.key}},\text{\em{e.value}})$, $a(t)\geq 0$,
integer $r \geq 1$, hash
  functions $H_i$ $i\in[r]$, $\tau>0$}
\KwOut{$\Oelems$: A set of at most $r$ output elements}
\DontPrintSemicolon
$\Oelems \gets []$\;
\ForEach{$i\in [r]$}{$y_i \sim
\Exp[\text{\em{e.value}}]$ \tcp*[f]{independent exponentially distributed with parameter $\text{\em{e.value}}$}\;
$v \gets \int_{\max\{\tau,y_i\}}^\infty a(t) dt$\;
\If{$v>0$}{$\Oelems.append((H_i(\text{\em{e.key}}),v))$ \tcp*[f]{New output element}\;
}
}
\Return{$\Oelems$}
\end{algorithm2e}

 We show that the max-distinct statistics of the output elements divided by $r$ 
\begin{equation}
\widehat{\LapM[W][a]}_\tau^\infty = \frac{1}{r} \MdCount\left(\bigcup_{e\in W}
  \Oelems_{a}(e)  \right)\ 
\end{equation}
has expectation equal to $\LapM[W][a]_\tau^\infty$ with good concentration:
\begin{lemma}  \label{comboquality:lemma}
\begin{equation*}
\E[\widehat{\LapM[W][a]}_\tau^\infty] = \LapM[W][a]_\tau^\infty 
\end{equation*}
\begin{eqnarray*}
 \lefteqn{\forall \delta<1, \,
 \Pr[\frac{|\widehat{\LapM[W][a]}_\tau^\infty - \LapM[W][a]_\tau^\infty}{\LapM[W][a]_\tau^\infty} \geq
\delta]}\\ 
&\leq& 2 \exp(-r \delta^2 \LapM[W][\tau]/3)\ .
\end{eqnarray*}
\end{lemma}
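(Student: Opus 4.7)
The plan is to mimic Lemma~\ref{unbiasedpoint:lemma}, replacing the $\{0,1\}$ indicator variable (``does an outkey appear'') with the continuous value $v$ that Algorithm~\ref{mxdistinct:algo} attaches to an output element.  For each input key $x$ with aggregate weight $w_x=\sum_{e:\,e.\text{key}=x} e.\text{value}$ and each $i\in[r]$, let $Y_{x,i}$ denote the minimum of the independent $\Exp[e.\text{value}]$ draws taken across elements $e$ sharing key $x$.  By additivity of exponential rates, $Y_{x,i}\sim\Exp[w_x]$.  Since the value $v(y):=\int_{\max\{\tau,y\}}^\infty a(t)\,dt$ attached to an output element is nonincreasing in $y$, the maximum value accruing to outkey $H_i(x)$ -- that is, its contribution to $\MdCount$ -- is exactly $m_{H_i(x)}:=v(Y_{x,i})$.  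The degenerate case $v=0$ in which no element is emitted is harmless because such an outkey contributes $0$ to $\MdCount$ in either interpretation.  The variables $\{m_{H_i(x)}\}_{x,i}$ are mutually independent by construction.

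For the expectation, I would apply Fubini to swap the expectation over $Y_{x,i}$ with the outer integral defining $v$:
\begin{equation*}
\E[v(Y_{x,i})]=\int_\tau^\infty a(t)\int_0^t w_x e^{-w_x y}\,dy\,dt=\int_\tau^\infty a(t)\bigl(1-e^{-w_x t}\bigr)\,dt.
\end{equation*}
Summing over $x$ and the $r$ independent copies, dividing by $r$, and then swapping $\sum_x$ with $\int_\tau^\infty$ so that $\sum_x(1-e^{-w_x t})=\LapM[W](t)$, I obtain $\E[\widehat{\LapM[W][a]}_\tau^\infty]=\int_\tau^\infty a(t)\LapM[W](t)\,dt$, which equals $\LapM[W][a]_\tau^\infty$ by \eqref{comb:eq} and the derivation \eqref{fromL:eq}.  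This establishes the first claim.

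For the concentration claim I would view $r\widehat{\LapM[W][a]}_\tau^\infty=\sum_{x,i} m_{H_i(x)}$ as a sum of independent nonnegative random variables, each bounded by $B:=\int_\tau^\infty a(t)\,dt$, which is finite because $\tau>0$ and by Lemma~\ref{invtranscond:lemma}.  Rescaling each summand by $B$ places the terms in $[0,1]$; the multiplicative Chernoff bound for sums of independent $[0,1]$-valued random variables then yields a two-sided tail of the form $2\exp(-r\delta^2\LapM[W][a]_\tau^\infty/(3B))$, which, after absorbing $B$ into the normalization of $a$ (equivalently, rescaling $a$ so that $B=1$), matches the stated bound.

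The main obstacle is exactly this rescaling step.  In Lemma~\ref{unbiasedpoint:lemma} the summands are $\{0,1\}$-valued Poisson trials and the Chernoff bound applies directly; here the summands are continuously distributed on $[0,B]$, so one must invoke the bounded-variable (or Bernstein-type) version of Chernoff rather than the Poisson-trial form, and must be careful to identify the right normalization so that the expected sum appears in the tail exponent.  Once the correct scaling is in place, the remainder of the argument is essentially identical to Lemma~\ref{unbiasedpoint:lemma}.
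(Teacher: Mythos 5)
Your reduction of the measurement to the independent sum $\sum_{x,i} v(Y_{x,i})$ with $Y_{x,i}\sim\Exp[w_x]$, where $v(y)=\int_{\max\{\tau,y\}}^\infty a(t)\,dt$, together with the Fubini/Tonelli computation $\E[v(Y_{x,i})]=\int_\tau^\infty a(t)(1-e^{-w_x t})\,dt$, correctly establishes the expectation claim, and it is essentially the paper's argument (the paper phrases it as linearity over point measurements); your observation that the maximum value accruing to an outkey is $v$ of the minimum exponential draw, valid because $a\geq 0$ makes $v$ nonincreasing, is exactly the reason the paper notes nonnegativity of $a$ is needed.

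The concentration part, however, has a gap in its final step. The bounded-variable Chernoff bound does give, for summands in $[0,B]$ with $B=\int_\tau^\infty a(t)\,dt$, a two-sided tail of $2\exp\bigl(-r\delta^2\,\LapM[W][a]_\tau^\infty/(3B)\bigr)$. But ``rescaling $a$ so that $B=1$'' does not make this match the stated bound: after the rescaling the exponent contains $\LapM[W][a]_\tau^\infty$, not $\LapM[W](\tau)$, and these are different quantities. The missing ingredient is the monotonicity of the transform: $\LapM[W](t)\geq\LapM[W](\tau)$ for all $t\geq\tau$, hence $\LapM[W][a]_\tau^\infty=\int_\tau^\infty a(t)\LapM[W](t)\,dt\geq B\,\LapM[W](\tau)$, so $\LapM[W][a]_\tau^\infty/B\geq\LapM[W](\tau)$ and your tail bound implies (indeed strengthens) the one claimed in the lemma. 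This relation, $\min_{t\in[\tau,\infty)}\LapM[W](t)=\LapM[W](\tau)$, is precisely what the paper's proof invokes; without it the stated bound does not follow from what you wrote. With that one line added your proof is complete, and your route is arguably more explicit than the paper's appeal to ``point-wise concentration at each $t$,'' since applying Chernoff directly to the bounded summands avoids any need to control the error simultaneously over a continuum of $t$ values.
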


\begin{proof}
The claim on the expectation follows from linearity of expectation and
the claim for point
measurements for each $t$ in Lemma~\ref{unbiasedpoint:lemma}.
The concentration follows from Lemma~\ref{chernoff:lemma} which
establishes
point-wise concentration at each $t$, combined with the
relation
$$\min_{t\in [\tau,\infty)} \LapM[W](t) = \LapM[W](\tau)\ $$
which follows from monotonicity of $\LapM[W](t)$.
\end{proof}
 Note that the assumption $a(t) \geq 0$ is necessary for correctness. 
It ensures monotonicity of $\int_{y}^\infty a(t) dt$ in $y$ which implies that 
the maximum indeed corresponds to minimum $y$.  

  We now address quality of approximation.  From the lemma, quality
is bounded by a function of  $\LapM[W](\tau)$.
Using our treatment of point measurements, we know that it suffices to
ensure that $\tau$  and $r$ are large enough so
that  \eqref{rcond:eq} holds. As with point
measurements, we can use $\tau =
 \sqrt{\epsilon}/\MAX(W))$ and $r$ as in \eqref{rseamless:eq}
to obtain a concentrated measurement of
$\LapM[W][a]_\tau^\infty$.   

 To obtain an estimate of $\LapM[W][a]_0^\infty$, we need to 
separately estimate $\LapM[W][a]_0^\tau$. From
Lemma~\ref{relevantrange:lemma}, when $\tau \leq
\sqrt{\epsilon}/\MAX(W)$, we have
\begin{equation} \label{headest:eq}
\LapM[W][a]_0^\tau \approx \int_0^\tau a(t) t \SUM(W) dt \approx
\widehat{\SUM(W)} \int_0^\tau a(t) t dt\ .
\end{equation}

Closed expressions for $\int_0^\tau a(t)tdt$ (used in \eqref{headest:eq} and for $\int_\tau^\infty
a(t)dt$ (used in Algorithm~\ref{mxdistinct:algo})  for inverse transforms of basic functions are provided in
Table~\ref{softtran:table}. Note that these expressions are bounded and well defined
for all $a$ that are inverse $\LapM$ transform of a $\overline{\softCap}$
function  (see  \onlyinproc{full version}\notinproc{Lemma~\ref{cumbounds:lemma}}).
\ignore{
Note that $\sum_x w_x$ is trivial to compute exactly using a
composable structure that stores its magnitude.  It can also be
approximated unbiasedly with good concentration and NRMSE $\epsilon$ 
by a structure of size $O(\epsilon^{-2}+ |\log\log \sum_x
w_x|)$  using composable weighted Morris
counters~\cite{Morris77,ECohenADS:TKDE2015}.  Therefore, we can obtain
tight estimates of $\LapM[W][a]_0^\tau$ using composable structures of
double logarithmic state.
}

 Our sketch consists of a $\widehat{\SUM}$ sketch applied to data
 elements $W$ and a  $\widehat{\MdCount}$ sketch applied to output
 element $E$
 produced by applying the element mapping
 Algorithm~\ref{mxdistinct:algo} to $W$.
The final estimate we return is
\begin{equation} \label{fest:eq}
\widehat{\widehat{\LapM}}[W][a]_0^\infty =
\frac{1}{r}\widehat{\MdCount}(E) + \widehat{\SUM}(W) \int_0^\tau
a(t)tdt\ .
\end{equation}
There is one subtlety here that was deferred for the sake of exposition:  We do not know $\MAX(W)$ and
therefore can not simply set $\tau$ prior to running the algorithm.
To obtain the worst-case bound we set $r$ as in \eqref{rseamless:eq} and 
set  $\tau$ adaptively to be the
$\ell=3\epsilon^{-2}$ smallest $y$ value generated for a distinct
output key.  To do so, we extend our sketch to include another
component, a ``sidelined'' set, which is the $\ell$ distinct output keys
with smallest $y$ values processed so far. Note that this sketch
component is also composable.
 The sidelined elements are
not immediately processed by the $\MdCount$ sketch:  The
processing is delayed to when they are not sidelined, that is, to the point that due to
merges or new arrivals, there are $\ell$ other distinct output keys with lower
$y$ values.  Finally, when the sketches are used to extract an
estimate of the statistics, we set
$\tau$ as the largest $y$ in the sidelined set, feed all the sidelined keys to the
$\MdCount$ sketch with value $\int_\tau^\infty a(t)dt$, and apply the estimator~\eqref{fest:eq}.
Note that explicitly maintaining the sidelined output keys and $y$ value pairs
requires $\epsilon^{-2}\log n$ storage.  Details on doing so in double logarithmic size  are outlined in 
\notinproc{Section~\ref{sidelined:sec}}\onlyinproc{the full version}.

\subsection{Max-distinct sketches} \label{maxdistinctcount:sec}
Popular composable weighted sampling schemes such as
the with-replacement $k$-mins samples and the
without-replacement  bottom-$k$ samples
\cite{Rosen1972:successive,Rosen1997a,bottomk07:ds,bottomk:VLDB2008} 
naturally extend to data elements with non-unique keys where the weight $m_x$ of the key $x$ is
interpreted as the maximum value of an element with key $x$.
A bottom-$k$ sample over elements with unique keys is
\cite{Rosen1972:successive} 
computed by 
drawing for each element an independent random rank value that is
exponentially distributed with parameter $\text{\em{e.value}}$
$r_e \sim \Exp[\text{\em{e.value}}]$.
The sample includes the $k$ distinct keys with minimum $r_e$ and the
corresponding $m_x$.  It is easy to
see that this sampling scheme can be carried out using composable sketches that stores $k=\epsilon^{-2}$
key value pairs.  For max-distinct statistics we instead use the ranks
$r_e = -\ln u_{\text{\em{e.key}}}/\text{\em{e.value}}$, where $u_{\text{\em{e.key}}} \sim U[0,1]$ is a 
hash function that maps keys to independent uniform random
numbers. These ranks are exponentially distributed, but consistent for
the same key so that the minimum rank of a key is exponentially
distributed with parameter equal to $m_x$.
 We can apply an estimator to this sample
to obtain an estimate of the max-distinct statistics of the data with CV 
$1/\sqrt{k-2}\approx \epsilon$.
Through technical use of offsets and randomized rounding and using 
 with-replacement sampling we can design 
max-distinct sketches of size  $O(\log\log n +
\epsilon^{-2}\log\epsilon^{-1})$ (assuming $1\leq m_x = O(\text{poly}(n))$.) 
The design relates to MinHash sketches with base-$b$ ranks
\cite{ECohenADS:TKDE2015}\notinproc{ and is detailed in Section~\ref{mxdistinctcompact:sec}}.

\ignore{
\subsection{Subtleties}

 We now note the technicality that we do not know $\max_x w_x$ in
 advance and thus can not work with a fixed $\tau$.  We  explain how
 to get  around this issue while still retaining our worst-case
 statistical guarantees on quality and bound on the composable structure
 size.   We first note that it suffices to work with $\tau$ large enough
 so that $\LapM[W](\tau) \geq 1$ and small enough so that it is $O(\tau)$.  The measurement is well concentrated,
 so we expect very few keys with minimum $y$ value below $\tau$.  We
 can therefore 
take $\tau$ to be the $\ell$ smallest $y$
 value of a distinct output key, for a small $\ell$.
To do so on the go, we 
separately store  the $\ell$
 output keys with largest values (smallest $y$ values), updating the
 $y$ value as needed when additional elements are processed.
  To do so, it suffices to store hashed
 keys to a small $O(\ell)$ set and their (approximate) minimum $y$
 value.   The contribution of keys that leave the set is added to the
 counter at that point with their full value.  The contribution of the remaining keys is
 factored in the final estimate when $\tau$ is determined.  At that
 point we also add an estimate of $\LapM[W][a]_0^\tau$.
  }

\section{Full range $\LapM$ measurements} \label{allt:sec}

  We consider now computing approximations of the transform $\LapM[W](t)$ for all 
  $t$, which we refer to as a 
  {\em full range measurement}.  This is 
a representation of a function that returns  ``coordinated'' $\LapM[W](t)$
measurements for any $t$ value.  
Our motivation is that an approximate full-range measurement of 
$W$ provides us with estimates of the statistics $f(W)$ for {\em any} $f\in \overline{\softCap}$. 

Concretely, consider the set of output keys
$\Okeys_{t,H}(e)$
generated by Algorithm~\ref{elemmap1:alg} for input element $e$
when fixing the parameter $r$, the set of hash functions $\{H_i\}$,
and the randomization $\{y_i\}$,   but varying $t$.  
It is easy to see that the set  $\Okeys_{t}(e)$ monotonically increases with 
$t$  until it reaches size $r$.  
We can now consider all outkeys generated for input $W$ as a function 
of $t$
$$\Okeys_{t}(W) \equiv \bigcup_{e\in W} \Okeys_{t}(e) \ .$$
The number of distinct outkeys 
increases with $t$ until it reaches size $rn$, where $n$ is the number 
of distinct input keys. 
Our full-range measurement is accordingly defined as the function
\begin{equation}
\widehat{\LapM}[W](t) = \frac{1}{r} \bigg|\Okeys_{t}(W) \bigg| \ . 
\end{equation}

 Equivalently, this can be expressed through the 
element mapping provided as Algorithm~\ref{atdistinct:algo}.  For 
each input element $e$, the algorithm generates $r$ output elements 
$\Oelems_{H}(e)$ that are 
outkey and value pairs.  We denote the set 
    of output elements generated for all input elements by $\Oelems(W)$.  We then have 
\begin{equation} \label{allrange}
\widehat{\LapM}[W](t) = \frac{1}{r} \dCount\{e\in \Oelems(W) \mid    \text{\em{e.value}}\leq t \}\ . 
\end{equation}
The number of distinct keys in output elements that have value at most $t$. 
Note that as with point measurements, for the regime of $t$ values where $|\Okeys_t(W)|\leq 
3\epsilon^{-2}$, we use instead $\widehat{\LapM}[W](t) \approx t\SUM(W)$. 

We can also define a combination measurement obtained from a full
range measurement as $$\widehat{\LapM}[W][a] = \int_0^\infty a(t)
\widehat{\LapM}[W](t) dt\ .$$
We show that this formulation is equivalent to directly obtaining a
combination measurement:
\begin{corollary} \label{fullrangequality:coro}
Lemma~\ref{comboquality:lemma} also applies to 
$\widehat{\LapM[W][a]}$ computed from a full-range measurement 
\end{corollary}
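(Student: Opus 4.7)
The plan is to establish a pointwise coupling: when Algorithm~\ref{mxdistinct:algo} and the full-range element mapping (Algorithm~\ref{atdistinct:algo}) are driven by the same hash functions $\{H_i\}$ and the same exponential draws $\{y_i\}$ per input element, the two resulting estimators are literally equal as real numbers. Once this is shown, Lemma~\ref{comboquality:lemma} transfers verbatim, since the random variable on the left-hand side is unchanged.

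To carry this out, for each index $i\in[r]$ and each active input key $x$, let $m_x^{(i)}$ denote the minimum of the $y_i$ values drawn across all elements $e$ with $\text{\em{e.key}}=x$. First I would rewrite the full-range combination measurement by swapping sum and integral:
\begin{equation*}
r\,\widehat{\LapM}[W][a] \;=\; \int_0^\infty a(t)\sum_{i,x}\mathbf{1}[m_x^{(i)}\le t]\,dt \;=\; \sum_{i,x}\int_{m_x^{(i)}}^\infty a(t)\,dt,
\end{equation*}
using $a(t)\ge 0$ and Tonelli. Next I would evaluate the $\MdCount$ side: Algorithm~\ref{mxdistinct:algo} attaches to each output element the value $\int_{\max\{\tau,y_i\}}^\infty a(t)\,dt$, which is monotonically nonincreasing in $y_i$ precisely because $a\ge 0$. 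Hence the $\MdCount$ collapses to the minimum $y_i$ over elements sharing output key $H_i(x)$, namely $m_x^{(i)}$, giving $r\,\widehat{\LapM[W][a]}_\tau^\infty = \sum_{i,x}\int_{\max\{\tau,m_x^{(i)}\}}^\infty a(t)\,dt$. Taking $\tau\to 0$ matches the full-range expression; for $\tau>0$ the same coupling produces the $\tau$-truncated version, which is what Lemma~\ref{comboquality:lemma} controls, so the corollary follows by identifying the two random variables and invoking the lemma.

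The one technical point to verify is that both sides are finite and the Tonelli swap is legitimate. Since $a$ is the inverse $\LapM$ transform of an $\overline{\softCap}$ function, Lemma~\ref{invtranscond:lemma} gives $\int_0^1 a(t)\,t\,dt<\infty$ and $\int_1^\infty a(t)\,dt<\infty$. Because each $m_x^{(i)}$ is a strictly positive exponential, $\int_{m_x^{(i)}}^\infty a(t)\,dt$ is almost surely finite, and the integrability condition near $0$ (together with $m_x^{(i)}>0$) also justifies the interchange. No other obstacle arises, since the combination measurement is by construction a linear functional in $\widehat{\LapM}[W](t)$ and the full-range object is designed so that this linearity is preserved by the coupling.
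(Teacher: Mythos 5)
Your proposal is correct and follows essentially the same route as the paper: couple the two element mappings through shared hash functions and exponential draws, and observe that the $\MdCount$-based combination measurement and $\int_0^\infty a(t)\,\widehat{\LapM}[W](t)\,dt$ are then the same random variable, so Lemma~\ref{comboquality:lemma} transfers directly. Your Tonelli interchange and the collapse of $\MdCount$ to the per-outkey minimum $m_x^{(i)}$ (using monotonicity of $\int_y^\infty a(t)\,dt$ since $a\geq 0$) simply spell out the identity the paper labels ``easy to verify,'' including the correct handling of the $\tau$-truncation.
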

\begin{proof}
  Consider element mappings for
full-range and combination measurements that are performed using the same 
randomization (hash functions $H$ and draws $y$).
Consider 
the combination measurement computed from $|\Okeys_t(W)|$. 
 It is easy to verify that 
\begin{eqnarray*}
\widehat{\LapM[W][a]} &=& \frac{1}{r} \MdCount\left(\bigcup_{e\in W}
  \Oelems_{a}(e)  \right) \\ &=& \frac{1}{r} \int_0^\infty a(t) 
|\Okeys_t(W)| dt\ . 
\end{eqnarray*}
\end{proof}

We next discuss composable {\em all-threshold}
distinct counting sketch that when applied to $\Oelems(W)$ allow us
to obtain an approximate 
point measurement of
$\widehat{\widehat{\LapM}}[W](t)$ for any $t$ or compute an approximate combination measurement $\widehat{\widehat{\LapM}}[W][a]$ for any $a \in \LapM^{-1}(\overline{\softCap})$.

\begin{algorithm2e}\caption{\protect\Oelems$_{\{H_i\}}(e)$: Map of 
    element $e$ to a set of output elements for full range measurement \label{atdistinct:algo}}
\KwIn{Element $e=(\text{\em{e.key}},\text{\em{e.value}})$, integer $r \geq 1$, hash 
  functions $\{H_i\}$ $i\in[r]$}
\KwOut{$\Oelems$: A set of $r$ output elements}
\DontPrintSemicolon
$\Oelems \gets []$\\
\ForEach{$i\in [r]$}{$y_i \sim 
\Exp[\text{\em{e.value}}]$ \tcp*[f]{independent exponentially distributed with parameter $\text{\em{e.value}}$}\;
$\Oelems.append((H_i(\text{\em{e.key}}),y_i))$ \tcp*[f]{New output element}\;
}
\Return{$\Oelems$}
\end{algorithm2e}

 An all-threshold sketch of  a set $E$ of key value pairs is a
composable summary structure which allows us to 
obtain an approximation $\widehat{\TdCount}_t(E)$ for any
 $t$  (see \eqref{tdistinct:def}.
The design mimics the extension of MinHash sketches
to All-Distance Sketches \cite{ECohen6f,ECohenADS:TKDE2015}, noting that
Hyperloglog sketches  can be viewed as a degenerate form of MinHash sketches.
\notinproc{Details are provided in Section~\ref{ATsketch:sec}.}

\section{The ``hard'' Capping span} \label{hardcappingtransform:sec}
\notinproc{
The capping transform expresses functions as
 nonnegative combinations of capping functions.  The transform
is interesting to us here because it allows us to 
extend approximations of capping statistics to the rich family of 
 functions with a capping transform. 
}
  We define the {\em capping transform}  of a function $f$ as a function 
  $a(t) \geq 0$ and $A_\infty \geq 0$ such that 
$\int_0^\infty 
a(t) dt < \infty$ and $$f(x) = A_\infty x + \int_0^\infty a(t) \Cap_t(x) dt\ .$$

\notinproc{It is useful to allow the transform to include a discrete and 
continuous components, that is, 
have continuous $a(t)$ and a discrete set $\{A_t\}> 0$ for $t\in Y$
such that 
$$f(x) = A_\infty x + \sum_{t\in Y} A_t \Cap_t(x) + \int_0^\infty a(t) \Cap_t(x) 
dx\ ,$$
and require that $A_\infty+ \sum_{t\in Y} A_t + \int_0^\infty 
a(t) dt < \infty$.  For convenience, however we will use a single 
continuous transform $a()$ but allow it to include a linear 
combination of Dirac 
delta functions at points $Y$.  The one exception is the coefficient
$A_\infty < \infty$ of
$\Cap_\infty(x) = x$ which is separate.
}

We denote by $\overline{\Cap}$ the set of functions with a capping
transform.  
We can characterize this set of functions as follows:
\begin{theorem} \label{hardcaptransform:thm}
Let $f:[0,\infty]$ be a nonnegative, continuous, concave, and monotone 
non-decreasing function such that $f(0)=0$ and 
$\partial_+ f(0) < \infty$.  Then $f\in \overline{\Cap}$ with the 
capping transform:
\begin{eqnarray}
 a(x) &=& \left\{
 \begin{array}{ll} - \partial^2 f(x)    & \text{ when }\, \partial^2_- f(x) = \partial^2_+
  f(x) \\  (\partial_- f(x) - \partial_+
 f(x))\delta_x  & \text{otherwise} \end{array}
\right.\\
 A_\infty &=& \partial f(\infty) 
\end{eqnarray}
where $\delta$ is the Dirac delta function and 
$\partial f(\infty) \equiv \lim_{t\rightarrow   \infty} \partial f(t)$. 
\end{theorem}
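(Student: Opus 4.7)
The plan is to establish the identity by interpreting $a(t)$ as (minus) the second derivative of $f$ in the distributional sense, and then verifying the formula by a direct calculation (integration by parts). The cleanest route is to first prove the smooth case and then lift to the general concave case.

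\textbf{Step 1 (smooth case).} Suppose temporarily that $f \in C^2([0,\infty))$. Set $a(t) = -\partial^2 f(t) \geq 0$ (nonnegative by concavity) and $A_\infty = \partial f(\infty) = \lim_{t\to\infty} \partial f(t)$, which exists in $[0,\partial_+ f(0)]$ since $\partial f$ is non-increasing and bounded below by $0$ (by monotonicity of $f$). I would then compute
\begin{equation*}
\int_0^\infty a(t) \Cap_t(x) dt = \int_0^x (-\partial^2 f(t))\,t\,dt + x\int_x^\infty (-\partial^2 f(t))\,dt.
\end{equation*}
The second piece evaluates to $x(\partial f(x) - A_\infty)$. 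For the first, integration by parts with $u=t$, $dv = -\partial^2 f(t)\,dt$ gives $-x\,\partial f(x) + \int_0^x \partial f(t)\,dt = -x\,\partial f(x) + f(x)$ (using $f(0)=0$). Summing produces exactly $f(x) - A_\infty x$, which rearranges to the claimed capping representation.

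\textbf{Step 2 (general concave case).} For a general concave $f$ as in the hypothesis, $\partial f$ need not exist everywhere, but the one-sided derivatives $\partial_\pm f$ exist and are non-increasing, so $-\partial_+ f$ is a non-decreasing function of bounded variation on each compact subinterval. Let $\mu$ be the Lebesgue--Stieltjes measure on $(0,\infty)$ induced by $-\partial_+ f$; this is a nonnegative Borel measure whose absolutely continuous part has density $-\partial^2 f(t)$ where the second derivative exists, and whose atoms, of mass $\partial_- f(t)-\partial_+ f(t)$, sit precisely at the kinks of $f$. I would then repeat the integration-by-parts argument using the Stieltjes version of the identity
\begin{equation*}
 f(x) - A_\infty x \;=\; \int_{(0,\infty)} \Cap_t(x)\, d\mu(t),
\end{equation*}
noting that the computation of Step~1 goes through verbatim when $\partial^2 f(t)\,dt$ is replaced by $-d\mu(t)$ (because integration by parts holds for the non-increasing function $\partial_+ f$). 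The measure $\mu$ is exactly the $a$ described in the theorem statement (continuous density $-\partial^2 f$ plus Dirac atoms at kinks).

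\textbf{Step 3 (finiteness).} Finally I would verify the integrability requirement $\int_0^\infty a(t) dt < \infty$ built into the definition of $\overline{\Cap}$. By monotonicity of $\partial_+ f$,
\begin{equation*}
\int_0^\infty a(t) dt = \mu((0,\infty)) = \partial_+ f(0) - A_\infty \;\leq\; \partial_+ f(0) < \infty,
\end{equation*}
using the hypothesis that $\partial_+ f(0)$ is finite. Nonnegativity of $A_\infty$ follows from monotonicity of $f$, and nonnegativity of $\mu$ follows from concavity.

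\textbf{Main obstacle.} The smooth case is a routine integration by parts; the real work is Step~2, where one must be careful with one-sided derivatives, discontinuities of $\partial f$, and justifying the Stieltjes integration by parts on $(0,x]$ and $[x,\infty)$. The atomic contribution $(\partial_- f(x)-\partial_+ f(x))\delta_x$ in the theorem is exactly what arises from the jump of $\partial_+ f$ at kinks, and matching this with the $\Cap_t(x)$ kernel (which itself has a kink at $t=x$) requires a small case analysis to confirm both sides agree pointwise.
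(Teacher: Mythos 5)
Your proposal is correct and follows essentially the same route as the paper's proof: interpret $a$ as the (distributional/Stieltjes) negative second derivative of $f$, with atoms of mass $\partial_- f - \partial_+ f$ at kinks, and verify $\int_0^\infty a(t)\Cap_t(x)\,dt = f(x)-A_\infty x$ by splitting at $t=x$ and integrating by parts, with the finiteness claim coming from $\int_0^\infty a(t)\,dt = \partial_+ f(0)-A_\infty$. The paper merely packages the smooth and kink contributions into a single $a$ via the relation $\partial f(w)=\partial_+f(0)-\int_0^w a(t)\,dt$ rather than treating the $C^2$ case first, so the two arguments are the same in substance.
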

\notinproc{
\begin{proof}
From monotonicity, $f$ is differentiable almost 
everywhere and the left and right derivatives $\partial_+ f(x)$ and 
$\partial_- f(x)$ are well defined. 
Because $f$ is continuous and monotone, we have 
\begin{equation} \label{diffrel:eq}
f(w) = \int_0^w \partial_+f(t) dt = \int_0^w \partial_- f(t) \equiv 
\int_0^w \partial f(t) dt\ . 
\end{equation}

 From concavity, for all $x\leq y$, $\partial_+f(y) 
\leq \partial_-f(x)$. 
In particular the slope of $\partial f(x)$ is initially 
 (sub)linear and can only decrease as $x$ decreases. 
In particular the limit $\partial f(\infty) \geq 0$ is well defined.

Since $\partial f(x)$ is monotone, it is differentiable and equality 
holds almost 
everywhere: 
$$\partial^2_+f(x)=\partial^2_-f(x) 
 \equiv \partial^2f(x)\ .$$
From concavity,  we have $\partial^2_+f(x), \partial^2_-f(x) \leq 
0$. 


Note that at all $w$ with well defined $\partial f(w)$, we have 
\begin{equation}   \label{arelation}
\partial f(w) = \partial_+f(0) - \int_0^w a(t) dt\ . 
\end{equation}
In particular, when taking the limit as $w\rightarrow \infty$ we get 
\begin{equation} \label{totA:eq}
A_\infty + \int_0^\infty  a(t) dt = \partial_+f(0)\ . 
\end{equation}

{\small
\begin{eqnarray}
\lefteqn{\int_0^\infty a(t) \Cap_t(w) dt}\\
 &=& \int_0^\infty a(t) \min\{t,w\} dt 
                                   \nonumber \\
&=& \int_0^w t a(t) dt + w \int_w ^\infty a(t) dt \label{ct1} \\
&=& w \int_0^w a(t)dt  - \int_0^w (\partial_+f(0) - \partial f(t))dt +
    w \int_w^\infty a(t)dt \label{ct2} \\
&=& w \int_0^\infty a(t) dt  + f(w) - w \partial_+f(0)  \label{ct3}\\
&=& f(w) - A_\infty w \label{ct4}
\end{eqnarray}}
Derivation \eqref{ct1} uses the definition of $\Cap_t$, 
\eqref{ct2} uses integration by parts and \eqref{arelation}, 
\eqref{ct3} uses \eqref{diffrel:eq}, and finally 
\eqref{ct4} uses  \eqref{totA:eq}. 
\end{proof}
} 
\notinproc{

 The capping transforms of some example functions are derived below
 and   summarized in Table~\ref{captran:table}.
 Note that the transform is a linear operator, so the 
  transform of a linear combination is a corresponding linear 
  combination of the transforms. 
\begin{itemize}
\item 
$f(w)  = \Cap_T(w)$ has the transform $A_\infty=0$ and $a(x) =
\delta_T(x)$. 
\item 
$f(w)  = w = \Cap_\infty(w)$ has the transform $A_\infty=1$, $a(x)=0$
for all $x$. 
\item 
Moments $f(w) = w^p$ for $p\in (0,1)$. 
Here we assume that $w \geq 1$ and for convenient replace the function 
with a linear segment $f(w) = \min\{w, w^p\}$ at the interval $w\in 
[0,1]$. Note that the modified moment function satisfies the 
requirements of Theorem~\eqref{hardcaptransform:thm}. 
We have $\partial f(w) =1$, $\partial^2 f(w) = 0$ for $w < 1$ and 
$\partial f(w) = p w^{p-1}$ and $\partial^2 f(w) = p(p-1) w^{p-2}$
when $w \geq 1$.  Note that $\partial f(\infty) = 0$ and $\partial_+ f(0)=1$. 

We obtain $A_\infty = 0$, $a(x) = p(1-p)x^{p-2}$ for $x>1$ and $a(1) =
(1-p)\delta_1$. 
\item 
Soft capping $f(w)=\softCap_T(w)= T(1-e^{-w/T})$.  We have $\partial f(w) =
e^{-w/T}$ and $\partial^2 f(w) = -\frac{1}{T}e^{-w/T}$. Note that 
$\partial f(\infty) = 0$ and $\partial_+ f(0)=1$. 

We obtain $A_\infty = 0$ and $a(x) = \frac{1}{T}e^{-x/T}$. 
\end{itemize}

\begin{table}\caption{Example capping transforms\label{captran:table}}
\center 
\begin{tabular}{l|l}
\hline 
$f(w)$ & capping transform \\
\hline 
$\Cap_T(w)$  &  $A_\infty=0$, $a(x) =
\delta_T(x)$ \\
$w$ & $A_\infty=1$, $a(x)=0$ \\
$\min\{w, w^p\}$ & $A_\infty = 0$, $a(x) = p(1-p)x^{p-2}$ for $x>1$,\\
&                   $a(x)=0$ for $x<1$,  $a(1) = (1-p)\delta_1$\\
$\softCap_T(w)$ & $A_\infty = 0$, $a(x) = \frac{1}{T}e^{-x/T}$ \\
$\log(1+w)$ & $A_\infty = 0$, $a(x) = \frac{1}{(1+x)^2}$
\end{tabular}
\end{table}

} 

\subsection{Sketching with signed inverse transforms} \label{useLapM:sec}\label{qualitycomb:sec}

We now consider sketching 
$f$-statistics that are not in 
$\overline{\softCap}$ but where $f$ has a 
{\em signed} inverse transform 
$a(t) = \LapM^{-1}[f(w)](t)$  \eqref{idealinverse:eq}.  
We use the notation $a(t) = a_+(t) -
a_-(t)$ where 
\begin{equation} \label{pm:def}
 a_+(t) = \max\{a(t),0\}\,  \text{ and }\,    a_-(t) = \max\{-a(t),0\}
 \ .   
\end{equation}
We define 
$f_+(w)  = \LapM[a_+](w)$ and 
$f_-(w)  = \LapM[a_-](w)$. 
Note that for all $w$, $f(w) = f_+(w) -f_-(w)$ and in particular 
$f(W) = f_+(W)-f_-(W)$. 
Since $a_+, a_- \geq 0$,  we can obtain a good 
approximations $\hat{f}_+(W)$ and $\hat{f}_-(W)$  for each of $f_+(W) = \LapM[W][a_+]$ and 
$f_-(W) = \LapM[W][a_-]$ using 
approximate full-range, two combination, or several point measurements when $a$ is discrete 
and small. We approximate $f(W)$ using the difference $\hat{f}(W) = \hat{f}_+(W)-\hat{f}_-(W)$. 


 The quality of this estimate depends on a 
 parameter $\rho$:
\begin{equation}\label{stability:eq}
\rho(a) \equiv  \max_w  \max \frac{\LapM[a_+](w)}{\LapM[a](w)}, \frac{\LapM[a_-](w)}{\LapM[a](w)}
\end{equation}

\begin{lemma}
For all $W$,  $f(w) = \LapM[a](w)$, 
{\small 
\begin{eqnarray*}
\lefteqn{\frac{|f(W)-\hat{f}(W)|}{f(W)} \leq }\\ &&\rho(a) 
\left(\frac{|\LapM[W][a_+]-\widehat{\LapM}[W][a_+]|}{\LapM[W][a_+]} \ +
\frac{|\LapM[W][a_-]-\widehat{\LapM}[W][a_-]|}{\LapM[W][a_-]} \
\right) 
\end{eqnarray*}
}
\end{lemma}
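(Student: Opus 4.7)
The plan is a short chain of elementary bounds: triangle inequality to split $|f(W)-\hat f(W)|$ into the two error contributions from the positive and negative parts, then pass from the pointwise definition of $\rho(a)$ to an integrated bound against the nonnegative distribution $W$.

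First I would write $f(W)=f_+(W)-f_-(W)$ and $\hat f(W)=\hat f_+(W)-\hat f_-(W)$, so that by the triangle inequality
\begin{equation*}
|f(W)-\hat f(W)| \;\leq\; |f_+(W)-\hat f_+(W)| + |f_-(W)-\hat f_-(W)|.
\end{equation*}
Dividing by $f(W)$ and multiplying and dividing each term by $f_\pm(W)=\LapM[W][a_\pm]$ (the identity $f_\pm(W)=\LapM[W][a_\pm]$ is exactly the derivation \eqref{fromL:eq} applied to the nonnegative coefficient functions $a_\pm$), the bound becomes
\begin{equation*}
\frac{|f(W)-\hat f(W)|}{f(W)} \;\leq\; \frac{f_+(W)}{f(W)}\cdot\frac{|\LapM[W][a_+]-\widehat\LapM[W][a_+]|}{\LapM[W][a_+]} \;+\; \frac{f_-(W)}{f(W)}\cdot\frac{|\LapM[W][a_-]-\widehat\LapM[W][a_-]|}{\LapM[W][a_-]}.
\end{equation*}

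The key remaining step is to bound each ratio $f_\pm(W)/f(W)$ by $\rho(a)$. From the definition \eqref{stability:eq}, for every $w>0$ we have $f_\pm(w)=\LapM[a_\pm](w)\leq \rho(a)\,\LapM[a](w)=\rho(a)f(w)$. Since $W(w)\geq 0$ and both sides are integrable, multiplying by $W(w)$ and integrating preserves the inequality:
\begin{equation*}
f_\pm(W) \;=\; \int_0^\infty f_\pm(w)\,W(w)\,dw \;\leq\; \rho(a)\int_0^\infty f(w)\,W(w)\,dw \;=\; \rho(a)\,f(W).
\end{equation*}
Substituting $f_\pm(W)/f(W)\leq \rho(a)$ into the previous display yields the stated inequality.

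There is no real obstacle, only a bookkeeping point to get right: the quantities $f(W)$, $f_+(W)$, $f_-(W)$, $\LapM[W][a]$, $\LapM[W][a_+]$, $\LapM[W][a_-]$ must all be shown to coincide via \eqref{fromL:eq} applied separately to $a$, $a_+$ and $a_-$, so that $\rho(a)$—defined pointwise in $w$—can be legitimately used to control the integrated ratios $f_\pm(W)/f(W)$. Once that identification is in place, the proof is just one triangle-inequality split followed by one integration of the pointwise $\rho$-bound against $W$.
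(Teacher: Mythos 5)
Your proposal is correct and follows essentially the same route as the paper's proof: a triangle-inequality split into the $a_+$ and $a_-$ error terms, followed by the bound $f_\pm(W)\leq\rho(a)\,f(W)$ obtained by integrating the pointwise inequality $\LapM[a_\pm](w)\leq\rho(a)\,\LapM[a](w)$ against the nonnegative distribution $W$. The only difference is cosmetic: you make explicit the identification $f_\pm(W)=\LapM[W][a_\pm]$ via \eqref{fromL:eq}, which the paper uses implicitly.
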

\notinproc{
\begin{proof}
We use 
$\hat{f}(W) = \widehat{\LapM}[W][a_+] - \widehat{\LapM}[W][a_-]$ and 
$f(W) = \LapM[W][a_+] - \LapM[W][a_-]$. 
Also note that using the definition of $\rho$, for all $W$,
\begin{eqnarray*}
f(W) &=& \int_0^\infty W(w) \LapM[a](w) dw \\
&\geq&  \int_0^\infty W(w) \frac{\LapM[a]_+(w)}{\rho(a)}  dw \\
&=&  \frac{1}{\rho(a)} \int_0^\infty W(w) 
\LapM[a]_+(w) dw = f_+(W)\ . 
\end{eqnarray*}
Symmetrically, 
for all $W$, $\rho f(W) \geq f_-(W)$. 

Combining it all we obtain 
{\small 
\begin{eqnarray*}
\lefteqn{\frac{|\hat{f}(W) - f(W)|}{f(W)}  =}\\
 && \frac{\widehat{\LapM}[W][a_+]-\LapM[W][a_+]}{f(W)}+
  \frac{\widehat{\LapM}[W][a_-]-\LapM[W][a_-]}{f(W)} \\
&\leq& 
  \rho(a) 
       \frac{|\widehat{\LapM}[W][a_+]-\LapM[W][a_+]|}{\LapM[W][a_+]}+
  \rho(a) \frac{|\widehat{\LapM}[W][a_-]-\LapM[W][a_-]|}{\LapM[W][a_-]}
\end{eqnarray*}
}
\end{proof}} 
That is, when the components $f_+(W)$ and $f_-(W)$ are estimated 
within relative error $\epsilon$, then our estimate of $f(W)$ has
relative error at most $\epsilon \rho$.  In particular,
the concentration bound in 
Lemma~\ref{comboquality:lemma} holds with 
$\rho\delta$ replacing $\delta$ and the 
sketch size has $\epsilon \rho$ replacing $\epsilon$. 

 When the exact inverse 
  transform $a$ of $f$ is not defined or has a large $\rho(a)$, we 
  look for an {\em approximate} inverse 
  transform $a$ such that $\rho(a)$ is small  and $f(w) \approx \LapM[a](w)$:
\begin{eqnarray} 
&& \relerr(f, \LapM[a]) = \max_{w>0} \frac{|f(w)-\LapM[a](w)|}{f(w)}
   \leq \epsilon \label{inverse1:eq} 
\end{eqnarray}
\notinproc{
We will apply this approach to approximate hard capping functions. 
In general, fitting data or functions to sums of exponentials is 
a well-studied problem in applied 
 math and numerical analysis. Methods often have stability issues. 
A method of choice 
is the Varpro (Variable projection) algorithm of Golub and Pereyra 
\cite{varpro_Golub:SiNumAs1973} and improvements \cite{varpro2013}
which has Matlab and SciPy implementation. 
}

\subsection{From $\Cap_1$ to  $\overline{\Cap}$}
We show that from an approximate signed inverse transform of $\Cap_1$ we 
can obtain one with the same quality for any $f\in  \overline{\Cap}$.
\notinproc{
We first consider the special case of $\Cap_T$ functions:
\begin{lemma} \label{1toT:lemma}
Let $\alpha$ be such that $\LapM[\alpha]$ is 
an approximation of 
$\Cap_1$.  Then for all $T$,  $\LapM[\alpha_T]$, where $\alpha_T(x) \equiv \alpha(x/T)$
is a corresponding approximation of $\Cap_T$.  That is,
\begin{eqnarray*}
\relerr(\Cap_T,\LapM[\alpha_T]) &=& \relerr(\Cap_1,\LapM[\alpha]) \\
\rho(\alpha_T) &=& \rho(\alpha)\ . 
\end{eqnarray*}
\end{lemma}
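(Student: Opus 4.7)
The plan is to exploit the scaling structure shared by $\LapM$ and the capping family. Starting from the integral defining $\LapM[\alpha_T](w)$ and substituting $s=t/T$, one obtains an identity of the form $\LapM[\alpha_T](w) = T \LapM[\alpha](wT)$. Since $\Cap_T$ obeys the matching scaling identity $\Cap_T(w)=T\Cap_1(w/T)$, both the candidate approximant and its target transform under $T$-scaling in parallel, up to an affine reindexing $w\mapsto \tilde w$ of $(0,\infty)$ that is a bijection. Consequently the pointwise approximation quality at $\Cap_1$ transports pointwise to the $\Cap_T$ setting.

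For the $\relerr$ claim, form the ratio $|\LapM[\alpha_T](w)-\Cap_T(w)|/\Cap_T(w)$ and expand both terms using the two scaling identities above. The common $T$-dependent prefactor appears in numerator and denominator and cancels, leaving exactly $|\LapM[\alpha](\tilde w)-\Cap_1(\tilde w)|/\Cap_1(\tilde w)$. Since the reindexing $w\mapsto\tilde w$ is a bijection of $(0,\infty)$, taking the supremum over either side gives the same value, which yields the desired equality of relative errors.

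For the $\rho$ claim, the key structural observation is that the scaling operator $\alpha(\cdot)\mapsto\alpha(\cdot/T)$ commutes with the decomposition into positive and negative parts in \eqref{pm:def}: because $T>0$, the sign of $\alpha(x/T)$ agrees with the sign of $\alpha$ at the rescaled argument, so $(\alpha_T)_+=(\alpha_+)_T$ and $(\alpha_T)_-=(\alpha_-)_T$. Applying the change-of-variables identity separately to $\alpha$, $\alpha_+$, and $\alpha_-$, the common $T$-factor cancels in each of the two ratios defining $\rho$ in \eqref{stability:eq}, so those ratios at $w$ coincide with the corresponding ratios at $\tilde w$ computed from $\alpha$. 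Maximizing over the bijection $w\mapsto \tilde w$ once more gives $\rho(\alpha_T)=\rho(\alpha)$.

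The main subtlety to handle carefully will be the distributional case, when $\alpha$ (or $\alpha_\pm$) contains a Dirac component, as happens already for the target $\Cap_T$ itself whose inverse $\LapM$ transform is $T\delta_{1/T}$. Then the substitution $s=t/T$ must be invoked in the distributional sense via $\delta(t/T-c)=T\delta(t-cT)$, so the prefactor produced by the substitution must be tracked. However, because an identical factor appears in every numerator and denominator forming $\relerr$ and $\rho$, it cancels, and the argument is unaffected by the presence of Dirac parts. Thus the conclusion holds uniformly for any $\alpha$ admissible in the sense of Lemma~\ref{invtranscond:lemma}.
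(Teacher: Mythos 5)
There is a genuine gap in the $\relerr$ part of your argument. From the stated definition $\alpha_T(x)=\alpha(x/T)$ you correctly derive $\LapM[\alpha_T](w)=T\,\LapM[\alpha](wT)$, but this does \emph{not} transform ``in parallel'' with $\Cap_T(w)=T\,\Cap_1(w/T)$: the approximant is evaluated at $wT$ while the target is evaluated at $w/T$. After cancelling the factor $T$ you are left with $\bigl|\LapM[\alpha](wT)-\Cap_1(w/T)\bigr|/\Cap_1(w/T)$, in which the two occurrences of the argument are different points, so there is no single reindexing $w\mapsto\tilde w$ producing $\bigl|\LapM[\alpha](\tilde w)-\Cap_1(\tilde w)\bigr|/\Cap_1(\tilde w)$; the claimed cancellation step fails. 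In fact, under your identity $\LapM[\alpha_T]$ tracks $\min\{T,T^2w\}$ rather than $\Cap_T(w)=\min\{T,w\}$ (sanity check with $\alpha=\delta_1$: you would get $T(1-e^{-wT})$, whereas the function that approximates $\Cap_T$ is $\softCap_T(w)=T(1-e^{-w/T})$, whose inverse transform is $T\delta_{1/T}$ per Table~\ref{softtran:table}), so the asserted equality of relative errors is badly violated for $T\neq 1$. The scaling that makes the lemma true is $\alpha_T(x)=T^2\alpha(Tx)$ (equivalently, each mass $a$ at position $\beta$ becomes mass $Ta$ at position $\beta/T$), for which $\LapM[\alpha_T](w)=T\,\LapM[\alpha](w/T)$; the lemma's ``$\alpha(x/T)$'' must be read with this normalization. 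With that scaling your cancellation argument goes through verbatim and coincides with the paper's proof, which rests precisely on the pointwise identity $\bigl|\Cap_T(w)-\LapM[\alpha_T](w)\bigr|/\Cap_T(w)=\bigl|\Cap_1(w/T)-\LapM[\alpha](w/T)\bigr|/\Cap_1(w/T)$. Your own careful tracking of the Jacobian in the Dirac case is exactly what exposes this mismatch, yet your write-up asserts the conclusion anyway rather than flagging or repairing the scaling.

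Your $\rho$ argument, by contrast, is essentially sound, and it is worth noting why: there both numerator and denominator are $\LapM$ transforms of rescaled coefficient functions evaluated at the \emph{same} argument, so whichever scaling convention is used, the $T$-factors and the argument rescaling cancel jointly, and taking the supremum over the bijection $w\mapsto wT$ (or $w/T$) of $(0,\infty)$ gives $\rho(\alpha_T)=\rho(\alpha)$. Your observation that $(\alpha_T)_+=(\alpha_+)_T$ and $(\alpha_T)_-=(\alpha_-)_T$ because $T>0$ is correct and matches the paper's correspondence $\LapM[\alpha_{T+}](w/T)/\LapM[\alpha_T](w/T)=\LapM[\alpha_+](w)/\LapM[\alpha](w)$. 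So the $\rho$ half stands; the $\relerr$ half needs the corrected scaling before your argument is valid.
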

\begin{proof}
The claim on $\relerr$ and $\rho$ follows from the pointwise equality 
\begin{equation}
\frac{|\Cap_T(w) - \LapM[\alpha_T](w)|}{\Cap_T(w)} =
\frac{|\Cap_1(w/T) -\LapM[\alpha](w/T)|}{\Cap_1(w/T)}\ . 
\end{equation}
For $\rho$ observe the correspondence 
\begin{equation}
\frac{\LapM[\alpha_{T+}](w/T)}{\LapM[\alpha_T](w/T)} =
\frac{\LapM[\alpha_+](w)}{\LapM[\alpha](w)}
\end{equation}
and similarly for $\alpha_-$.  The maximum over $w$ of both sides is 
therefore the same. 
\end{proof}
}

\begin{theorem}
Let $f\in  \overline{\Cap}$ and let $a(t)$ be the capping transform of 
$f$.
Let $\alpha(x)$ be such that $\LapM[\alpha](w)$ is an 
approximation of $\Cap_1$. 
Then 
\begin{equation}
c(x)  = \int_0^\infty a(T) \alpha(x/T)  dT\ ,
\end{equation}
is an approximate inverse transform of $f$ that satisfies
\begin{eqnarray}
\rho(c) &\leq & \rho(\alpha) \\
\relerr(f,\LapM[c]) &\leq& \relerr(\Cap_1,\LapM[\alpha]) 
\end{eqnarray}
\end{theorem}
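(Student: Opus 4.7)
My plan is to verify the two claimed bounds directly from the integral definition $c(x) = \int_0^\infty a(T)\alpha(x/T)\,dT$, treating $c$ as a nonnegative (since $a(T) \geq 0$) mixture of scaled copies of the base approximate inverse transform $\alpha$. The guiding observation is that the change of variable $u = x/T$ makes each scaled copy $\alpha_T(\cdot) \equiv \alpha(\cdot/T)$ an approximation of $\Cap_T$ that exactly matches the role of $\alpha$ as an approximation of $\Cap_1$; this linear scaling preserves both the relative error and the stability parameter $\rho$, as already formalized in Lemma~\ref{1toT:lemma}. So the theorem is morally saying that one can integrate Lemma~\ref{1toT:lemma} against the capping transform coefficient $a$.

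For the $\relerr$ bound, I would first apply Fubini to $\LapM[c](w) = \int_0^\infty (1-e^{-wx}) c(x)\,dx$ to obtain $\LapM[c](w) = \int_0^\infty a(T)\,\LapM[\alpha_T](w)\,dT$; the integrability is underwritten by the conditions on $\alpha$ in Lemma~\ref{invtranscond:lemma} and on $a$ from the capping-transform construction in Theorem~\ref{hardcaptransform:thm}. Letting $\epsilon = \relerr(\Cap_1, \LapM[\alpha])$, Lemma~\ref{1toT:lemma} gives the pointwise bound $|\LapM[\alpha_T](w) - \Cap_T(w)| \leq \epsilon\,\Cap_T(w)$ for every $T, w > 0$. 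Integrating this against $a(T) \geq 0$ and using the capping-transform representation $f(w) = \int_0^\infty a(T)\Cap_T(w)\,dT$ (implicitly assuming $A_\infty = 0$, or handling the linear term $A_\infty w$ separately via a $\SUM(W)$ estimate as in Section~\ref{weightedDC:sec}) yields $|\LapM[c](w) - f(w)| \leq \epsilon f(w)$.

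For the $\rho$ bound, I would exhibit a nonnegative decomposition $c = \tilde c_+ - \tilde c_-$ with $\tilde c_\pm(x) \equiv \int_0^\infty a(T)\,\alpha_\pm(x/T)\,dT$, both nonnegative because $a(T) \geq 0$. This need not coincide with the canonical pointwise max/min decomposition $c_\pm$ from \eqref{pm:def}, but any splitting of a signed function into nonnegative summands dominates the canonical one pointwise, so $c_\pm(x) \leq \tilde c_\pm(x)$ and therefore $\LapM[c_\pm](w) \leq \LapM[\tilde c_\pm](w)$. Applying the scaled bound $\LapM[\alpha_{T,\pm}](w) \leq \rho(\alpha)\,\LapM[\alpha_T](w)$ supplied by Lemma~\ref{1toT:lemma}, integrating against $a(T)$, and reusing the mixture identity from the $\relerr$ step, I obtain $\LapM[\tilde c_\pm](w) \leq \rho(\alpha)\,\LapM[c](w)$; taking the supremum over $w$ delivers $\rho(c) \leq \rho(\alpha)$. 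The main subtlety I expect to handle carefully is exactly this decomposition step: the natural ``sign-of-$\alpha(x/T)$'' split of $c$ is not the canonical max/min split, so one has to observe that any nonnegative split dominates the canonical one before the $\rho$ bound can be cleanly pushed through the mixture; a secondary book-keeping issue is the $A_\infty$ coefficient of the capping transform, which is not representable as a $\softCap$ mixture and so needs to be either assumed zero or estimated separately.
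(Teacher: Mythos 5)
Your proposal is correct and follows essentially the same route as the paper's proof: reduce to the scaled approximations $\alpha_T$ via Lemma~\ref{1toT:lemma}, swap integrals to write $\LapM[c]$ as the nonnegative mixture $\int_0^\infty a(T)\,\LapM[\alpha_T](w)\,dT$, and bound $\rho(c)$ by observing that the canonical decomposition $c_\pm$ is pointwise dominated by the mixture of $\alpha_{T\pm}$. Your explicit handling of the $A_\infty$ term is a small point of extra care that the paper glosses over, but it does not change the argument.
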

\notinproc{
\begin{proof}
Recall from Lemma~\ref{1toT:lemma}, that for all $T$,
$\Cap_T$ is approximated by $\LapM[\alpha_T]$, where 
$\alpha_T(x)=\alpha(x/T)$). We have 
\begin{eqnarray*}
f(w) &=& \int_0^\infty a(T) \Cap_T(w) dT\\
\hat{f}(w) &=& \int_0^\infty a(T) \LapM[\alpha_T](w) dT 
\end{eqnarray*}
The approximation $\hat{f}$ is obtained by substituting respective 
approximations for the capping functions. 
For all $W$,
\begin{eqnarray*}
\relerr(f(W),\hat{f}(W)) &\leq& \max_w \relerr(\Cap_T,\LapM[\alpha_T])\\
&=& \relerr(\Cap_1,\LapM[\alpha])\ . 
\end{eqnarray*}
The inequality follows from $a\geq 0$. 
The last equality follows from Lemma~\ref{1toT:lemma}.

We now show that $\hat{f}(W) = \LapM[c]$:
\begin{eqnarray*}
\hat{f}(w) 
     &=&
\int_0^\infty a(T) \LapM[\alpha_T](w) dT \\
&=& \int_0^\infty \int_0^\infty a(T) \alpha_T(x) (1-e^{-wx})dxdT \\
&=& \int_0^\infty \int_0^\infty a(T) \alpha(x/T)  dT (1-e^{-wx})dx\\
&=& \int_0^\infty c(x) (1-e^{-wx})dx = \LapM[c]\ ,
\end{eqnarray*}
where 
$$c(x)  = \int_0^\infty a(T) 
\alpha_T(x)  dT = \int_0^\infty a(T) 
\alpha(x/T)  dT\ .$$

We will now  show that 
$\forall w,\ \LapM[c_+](w) \leq \rho(\alpha) \LapM[c](w) $.  The claim for $c_-$ is 
symmetric and together using the definition of $\rho$ they imply that $\rho(c) \leq \rho(\alpha)$. 
We first observe that 
\begin{eqnarray*}
c_+(x) &=& \max\{0,\int_0^\infty a(T)  (\alpha_{T+}(x)-\alpha_{T-}(x) 
dT\} \\
 &=&  \max\{0,\int_0^\infty a(T)  (\alpha_{T+}(x)dT-\int_0^\infty a(T)\alpha_{T-}(x)  dT\}\\
&\leq& \int_0^\infty a(T)  \alpha_{T+}(x)  dT\ . 
\end{eqnarray*}
The last equality follows from the two integrals being nonnegative. 

Therefore,
\begin{eqnarray*}
\LapM[c_+](w)  &\leq& \int_0^\infty a(T) \LapM[\alpha_{T+}](w) dT\\
&\leq& \int_0^\infty a(T) \rho(\alpha_T) \LapM[\alpha_{T}](w) dT\\
&=& \rho(\alpha) \int_0^\infty a(T)  \LapM[\alpha_{T}](w) dT\\
&=& \rho(\alpha) \LapM[c](w) \ . 
\end{eqnarray*}
Using Lemma~\ref{1toT:lemma} that established $\rho(\alpha) = \rho(\alpha_T)$. 

\end{proof}
}
 It follows that to approximate $f(W)$ we do as follows. We compute the capping
 transform of $f$ and use an approximate inverse transform of
 $\Cap_1$, from which we obtain an approximate inverse transform $c$
 of $f$.  We then perform 
 two combination measurements (with respect to the negative and 
positive components $c_+$ and $c_-$).  Alternatively, we can use one 
full-range measurement to estimate both.

\ignore{
{\em Generalized Laplace measurements} $\LapM[W](\tau)$, specified by a monotone 
non-increasing  function $\tau(y) \geq 0$ of our choice:
  $$\LapM[W](\tau) \equiv \int_0^\infty  \tau(y)  \int_0^\infty  W(w) w 
  \exp(-wy) dw dy\ .$$

For $\alpha() \geq 0$ of our choice, by defining 
(roughly...) $\tau(x) = \alpha'(x)$ we have 
$$\LapM[W](\tau) = \int_0^\infty \alpha(t)dt - \int_{0}^{\infty} \alpha(t) \Laplace[W](t) dt\ .$$ 
\end{itemize}
Laplace measurements with threshold $t$ can be expressed as 
  generalized Laplace measurements with a 
step function $\tau$, where 
  $\tau(y) =1$ for $y\leq t$ and $\tau(y)=0$ otherwise:
\begin{eqnarray*}
\LapM[W](\tau)  &=& 
\int_0^t  \int_0^\infty  W(w) w \exp(-wy) dw dy \\
&=& \int_0^\infty W(w)   \int_0^t  w \exp(-wy) dy dw \\
&=& \int_0^\infty W(w)  (1-\exp(-wt)) dw \\
&=& 1 - \int_0^\infty W(w) \exp(-wt) = 1-\Laplace[W(w)](t)\ . 
\end{eqnarray*}

}

\subsection{Sketching $\Cap_1$} \label{hardcap:sec}
We consider approximate inverse transforms of  $\Cap_1$.
\ignore{
 We consider now approximation of $\Cap_T$ statistics from $\LapM$
 measurements. To do so, as discussed in
 Section~\ref{qualitycomb:sec}, we fit $\Cap_T$ 
to  a linear combination 
$\alpha(t)$ of 
functions of the form $1-\exp(- w t)$.   That is,
$$\LapM[\alpha](w)  = \int_0^\infty \alpha(t) (1-\exp(- t 
w))  dt\ .$$
We would like the relative error of the approximation
$$\relerr(\Cap_T, \LapM[\alpha]) = \max_w \frac{|\Cap_T(w) - \LapM[\alpha](w)|}{\Cap_T(w)}$$
to be small, and also to have well behaved coefficients, that is, have
a small $\rho(\alpha)$ \eqref{stability:eq}.

}
 The simplest approximation (see \eqref{softcap:eq}) is to 
approximate $\Cap_T$ statistics by
$\softCap_T$.  The worst-case  error of this approximation is
$\relerr(\Cap_T,\softCap_T) = 1/e \approx 0.37$.  Note, however,
that the
relative error is maximized at $w = T$, but vanishes when $w \ll T$ and $w \gg T$.  This
means that only distributions that are heavy with keys of weight
approximately $T$ would have significant error.
Noting that $\softCap_T$ is an underestimate of $\Cap_T$, we can
decrease the worst-case relative error using the approximation
\begin{equation} \label{scaledsoft:eq}
\frac{2e}{2e-1} \softCap_T(w)\ 
\end{equation}  
and obtain
$\relerr(\Cap_T, \frac{2e}{2e-1} \softCap_T) = \frac{1}{2e-1} \approx
0.23\ .$
This improvement, however, comes at the cost of spreading the error,
that otherwise dissipated for very large and very small  frequencies $w$,
across all frequencies\notinproc{ (see Figure~\ref{capapproxplot:fig})}.

We derive tighter approximations of  $\Cap_1$ using a signed
approximate inverse transform $\alpha()$.
 We first specify properties of $\alpha()$ so that
 $\LapM[\alpha]$ has desirable properties as an approximation of $\Cap_1$.
To have the error vanish for $w \gg 1$, that is, have $\LapM[\alpha](w)
\rightarrow 1$ when $w\rightarrow +\infty$, we must have 
\begin{equation} \label{largecond:eq}
\int_0^\infty \alpha(t) dt = 1\ .
\end{equation}
To have the error vanish for $w \ll 1$, that is, have $\LapM[\alpha](w)/w
\rightarrow 1$ when $w\rightarrow 0$, we must have 
\begin{equation} \label{smallcond:eq}
\int_0^\infty t \alpha(t) dt = 1\ .
\end{equation}
\onlyinproc{
We show (see full version) that using $\alpha$ of the form
$$\alpha(t) =
(A+1)\delta_1(t) - \alpha_1 \delta_{\beta_1}(t) - \alpha_2
\delta_{\beta_2}(t)\ ,$$ where $\delta$ is the Dirac delta function,
we obtain the following:
{\center
\begin{tabular}{rrr || l | r}
$A$ & $\beta_1$ & $\beta_2$  & $\relerr\approx$ & $\rho(\alpha)<$ \\
\hline
$10.0$ & $0.9$ & $3.75$ & $0.115$ & $12.4$ \\
$1.5$ & $0.6$ & $7.97$ &  $0.14$ & $2.9$ 
\end{tabular}\\
}
}
\notinproc{
We can relate the approximation quality obtained by
$\alpha()$ to its ``approximability'':
\begin{theorem} \label{boundrho:thm}
Let $\alpha()$ be such that \eqref{largecond:eq} and
\eqref{smallcond:eq} hold.  Let $\alpha_+()$ and $\alpha_-()$ be
defined as in \eqref{pm:def}.
Then 
\begin{equation}
\rho(\alpha) \leq \frac{\int_0^\infty \alpha_+(t)
  dt}{1-\relerr(\LapM[\alpha],\Cap_1)}\ .
\end{equation}
\end{theorem}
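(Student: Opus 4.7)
The first move is to collapse the two cases in the definition of $\rho(\alpha)$ to a single inequality. Since $\LapM[\alpha_-](w) = \LapM[\alpha_+](w) - \LapM[\alpha](w)$, any uniform bound
$\LapM[\alpha_+](w)/\LapM[\alpha](w) \leq R$
automatically yields $\LapM[\alpha_-](w)/\LapM[\alpha](w) \leq R-1 \leq R$. Hence it suffices to bound the single ratio $\LapM[\alpha_+](w)/\LapM[\alpha](w)$ uniformly in $w > 0$ by $R := A_+/(1-\epsilon)$, where $A_+ := \int_0^\infty \alpha_+(t)\,dt$ and $\epsilon := \relerr(\LapM[\alpha], \Cap_1)$.

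The second move uses the approximation hypothesis: by definition of $\relerr$, $\LapM[\alpha](w) \geq (1-\epsilon)\Cap_1(w)$ for every $w > 0$. So the desired ratio bound follows once we show the pointwise inequality $\LapM[\alpha_+](w) \leq A_+ \Cap_1(w)$ for all $w > 0$, and I would split on the two regimes $w \geq 1$ and $w < 1$ that match the piecewise form of $\Cap_1$. The $w \geq 1$ case is immediate: $\LapM[\alpha_+](w) = \int \alpha_+(t)(1-e^{-wt})\,dt \leq \int \alpha_+(t)\,dt = A_+ = A_+\Cap_1(w)$.

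For $w < 1$, I would exploit that $\LapM[\alpha_+]$ is concave in $w$ (a positive combination of the concave maps $w \mapsto 1-e^{-wt}$) with $\LapM[\alpha_+](0) = 0$. The tangent inequality at the origin then gives
$$\LapM[\alpha_+](w) \leq w \cdot \partial_w \LapM[\alpha_+](0^+) = w \int_0^\infty t\,\alpha_+(t)\,dt.$$
The target inequality $\LapM[\alpha_+](w) \leq A_+\, w$ thus reduces to showing $\int t\alpha_+ \leq A_+ = \int \alpha_+$.

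The main obstacle is precisely this last inequality. Subtracting the two moment conditions \eqref{largecond:eq} and \eqref{smallcond:eq} only gives $\int (t-1)\alpha_+\,dt = \int (t-1)\alpha_-\,dt$, which by itself is not enough. In the discrete examples of interest the positive part $\alpha_+$ is concentrated at $t=1$, so $\int t\alpha_+ = A_+$ trivially. My plan is therefore to isolate the structural consequence of the approximability hypothesis $\relerr(\LapM[\alpha],\Cap_1) < 1$ that forces $\int t\alpha_+ \leq A_+$ (equivalently, that prevents $\alpha_+$ from putting too much mass at large $t$): a quantitative version of the observation that placing $\alpha_+$-mass at $t \gg 1$ forces $\LapM[\alpha](w)$ to grow superlinearly near $w = 0$, contradicting the relative-error bound in the limit $w \to 0^+$. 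Turning that qualitative picture into the clean bound $\int t\alpha_+ \leq A_+/(1-\epsilon)$ (which would actually suffice, absorbing the loss into the $(1-\epsilon)$ denominator) is the delicate step; once it is in hand, combining with steps one and two closes the proof.
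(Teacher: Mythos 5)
Your reduction is sound as far as it goes (bounding only $\LapM[\alpha_+]/\LapM[\alpha]$ suffices since $\LapM[\alpha_-]=\LapM[\alpha_+]-\LapM[\alpha]$, the case $w\geq 1$ is immediate, and $\LapM[\alpha](w)\geq(1-\relerr)\Cap_1(w)$ is the right use of the hypothesis), but the proposal stops exactly at the load-bearing step: the inequality $\int_0^\infty t\,\alpha_+(t)\,dt\leq\int_0^\infty \alpha_+(t)\,dt$, which you acknowledge you cannot derive. Moreover, your plan to extract it from $\relerr(\LapM[\alpha],\Cap_1)<1$ cannot succeed under the stated hypotheses. Consider $\alpha=\delta_1+m\delta_T-a\delta_\beta-b\delta_{T'}$ with $\beta<1<T<T'$, $m=1$, $a=0.1/T$, $b=m-a$, and $T'=(mT-a\beta)/b\approx T+0.1$; then \eqref{largecond:eq} and \eqref{smallcond:eq} hold, and an elementary estimate gives $0.43\,\Cap_1(w)\leq\LapM[\alpha](w)\leq1.1\,\Cap_1(w)$ for all $w$ (the atoms at $T$ and $T'$ nearly cancel), so $\relerr\leq 0.57$. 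Yet $\int t\,\alpha_+(t)\,dt\approx mT$ is unbounded, and at $w\approx 1/T$ one has $\LapM[\alpha_+](w)\approx m(1-1/e)$ while $\LapM[\alpha](w)=\Theta(1/T)$, so $\rho(\alpha)$ grows like $T$ while the theorem's right-hand side stays below $5$. So under \eqref{largecond:eq}--\eqref{smallcond:eq} alone, both your missing inequality and the stated conclusion can fail; no refinement of your ``mass at large $t$ forces superlinear growth near $0$'' heuristic can close the gap, because compensating negative mass nearby kills the superlinear growth without restoring the moment bound.

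For comparison, the paper's proof follows the same two steps you take and then simply asserts $\LapM[\alpha_\pm](w)\leq\bigl(\int_0^\infty\alpha_\pm(t)\,dt\bigr)\min\{1,w\}$, which for $w<1$ is precisely the first-moment control you got stuck on; it is justified for the family the paper actually optimizes over, where $\alpha_+=(A+1)\delta_1$, since then $\int t\,\alpha_+\,dt=\int\alpha_+\,dt$ and subtracting \eqref{largecond:eq} from \eqref{smallcond:eq} gives $\int t\,\alpha_-\,dt=\int\alpha_-\,dt$ as well. So your instinct that an extra structural assumption is needed is correct: adding the hypothesis $\int_0^\infty t\,\alpha_+(t)\,dt\leq\int_0^\infty\alpha_+(t)\,dt$ (e.g., $\alpha_+$ supported on $t\leq 1$) makes both your argument and the paper's go through essentially verbatim. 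One small correction: your parenthetical that the relaxed bound $\int t\,\alpha_+\,dt\leq A_+/(1-\epsilon)$ ``would suffice'' only yields $\rho(\alpha)\leq A_+/(1-\epsilon)^2$, not the constant claimed in the theorem.
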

\begin{proof}
From conditions \eqref{largecond:eq} and \eqref{smallcond:eq} we
obtain
\begin{eqnarray*}
\int_0^\infty \alpha_+(t)dt &=& 1+ \int_0^\infty \alpha_-(t)dt \\
\int_0^\infty t\alpha_+(t)dt &=& 1+ \int_0^\infty t \alpha_-(t)dt 
\end{eqnarray*}

We also have 
\begin{eqnarray}
\LapM[\alpha_+](w) &\leq& \int_0^\infty \alpha_+(t)dt \min\{1,w\} \\
\LapM[\alpha_-](w) &\leq& \int_0^\infty \alpha_-(t)dt \min\{1,w\} \\
\LapM[\alpha](w) &\geq& (1-\relerr(\LapM[\alpha],\Cap_1)) \min\{1,w\}
\end{eqnarray}
The last inequality follows from the definition of $\relerr$.

Combining, we obtain 
\begin{eqnarray*}
\max_w \frac{\LapM[\alpha_+](w)}{\LapM[\alpha](w)} &\leq& 
\frac{\int_0^\infty \alpha_+(t)dt}{1-\relerr(\LapM[\alpha],\Cap_1)}\\
\max_w \frac{\LapM[\alpha_-](w)}{\LapM[\alpha](w)} &\leq& 
\frac{\int_0^\infty \alpha_-(t)dt}{1-\relerr(\LapM[\alpha],\Cap_1)} \\
&=& \frac{\int_0^\infty
  \alpha_+(t)dt - 1}{1-\relerr(\LapM[\alpha],\Cap_1)}  \\
&<& \frac{\int_0^\infty
       \alpha_+(t)dt}{1-\relerr(\LapM[\alpha],\Cap_1)}
\end{eqnarray*}
\end{proof}

 We now consider using simple combinations of three functions of the
 particular form:
$$(A+1)  (1-\exp(-w))- \alpha_1 (1-\exp(-\beta_1 w)) -
\alpha_2 (1-\exp(-\beta_2 w))\ $$
where  $\beta_1< 1< \beta_2$.
Equivalently, we  estimate $\Cap_1(w)=\min\{1,w\}$   using $\LapM[\alpha]$ where $$\alpha(t) =
(A+1)\delta_1(t) - \alpha_1 \delta_{\beta_1}(t) - \alpha_2
\delta_{\beta_2}(t)\ ,$$ where $\delta$ is the Dirac delta function.

From conditions \eqref{largecond:eq} and \eqref{smallcond:eq} we obtain
\begin{eqnarray*}
A  - \alpha_1 -  \alpha_2 &=&  0 \\
A - \alpha_1\beta_1 - \alpha_2\beta_2 & =& 0\ .
\end{eqnarray*}
Therefore we obtain
\begin{eqnarray*}
\alpha_1 = A\frac{(\beta_2-1)}{\beta_2-\beta_1} \\
\alpha_2 = A\frac{(1-\beta_1)}{\beta_2-\beta_1}\ .
\end{eqnarray*}
We did a simple grid search with local optimization over the free parameters
$A,\beta_1,\beta_2$ focusing on small values of $A$.
 Note that $\int_0^\infty \alpha_+(t)dt = A+1$.  Applying 
 Theorem~\ref{boundrho:thm}
we obtain that $\rho(\alpha)\leq (A+1)/(1-\relerr)$, which is a small 
constant when $A$ and $\relerr$ are small. 
Two choices and their tradeoffs are listed in the table below.  Note
the small error (less than $12\%$).  In both cases the error vanishes for 
small and large $w$.   
 We obtain that 
three point measurements with a small relative 
error for the points $\LapM[W](t)$, $\LapM[W](\beta_1 t)$,
$\LapM[W](\beta_2 t)$ yield an approximation of the linear combination 
with quality the same order.\\
{\center
\begin{tabular}{rrr || l | r}
$A$ & $\beta_1$ & $\beta_2$  & $\relerr\approx$ & $\rho(\alpha)<$ \\
\hline
$10.0$ & $0.9$ & $3.75$ & $0.115$ & $12.4$ \\
$1.5$ & $0.6$ & $7.97$ &  $0.14$ & $2.9$ 
\end{tabular}\\
}

} 
Figure~\ref{capapproxplot:fig}
shows $\Cap_1(w)$ and various
approximations.  The single point measurement approximations:  $\softCap_1$
and scaled $\softCap_1$ and the two 3-point approximations from the
table.  One plot shows the functions and the other shows their ratio to $\Cap_1$ which
corresponds to the relative error as a function of $w$.  The graphs
show that the error vanishes for small and large values of $w$ for all but the
scaled $\softCap_1$ \eqref{scaledsoft:eq}.  We can also see the
smaller error for the 3-point approximations.

\begin{figure*}
\center
\includegraphics[width=0.42\textwidth]{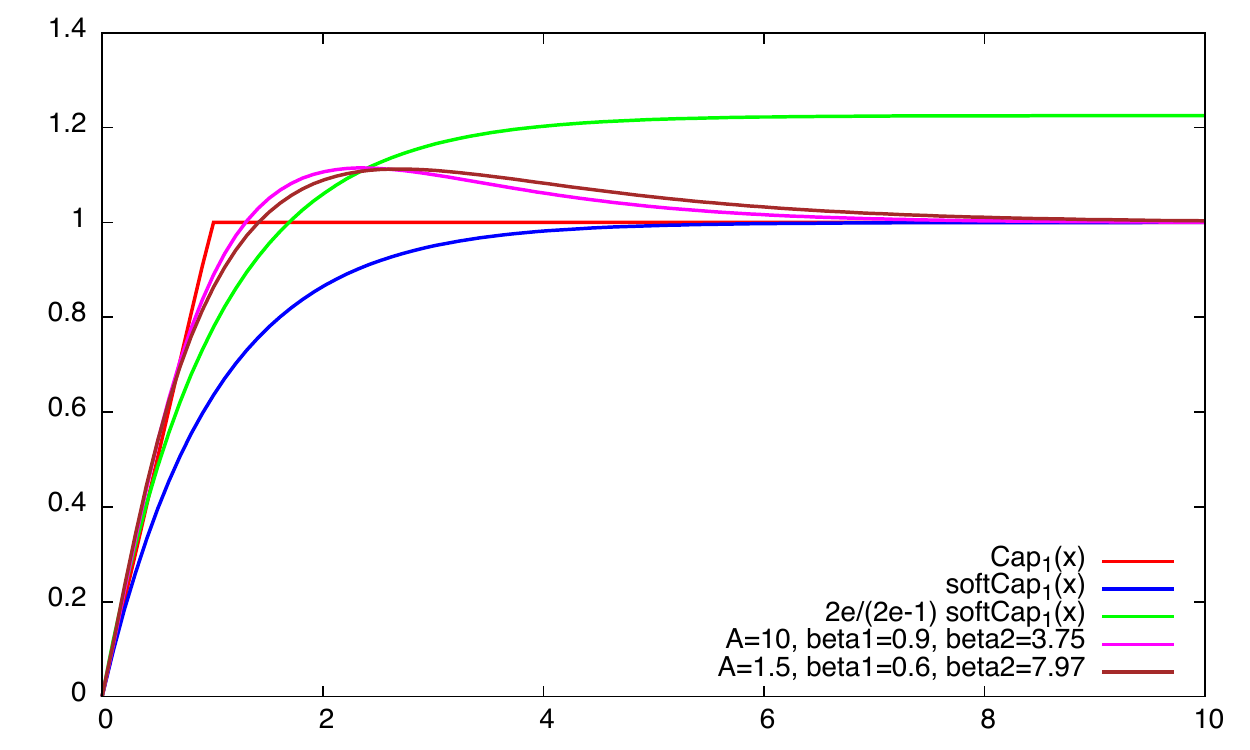}
\includegraphics[width=0.42\textwidth]{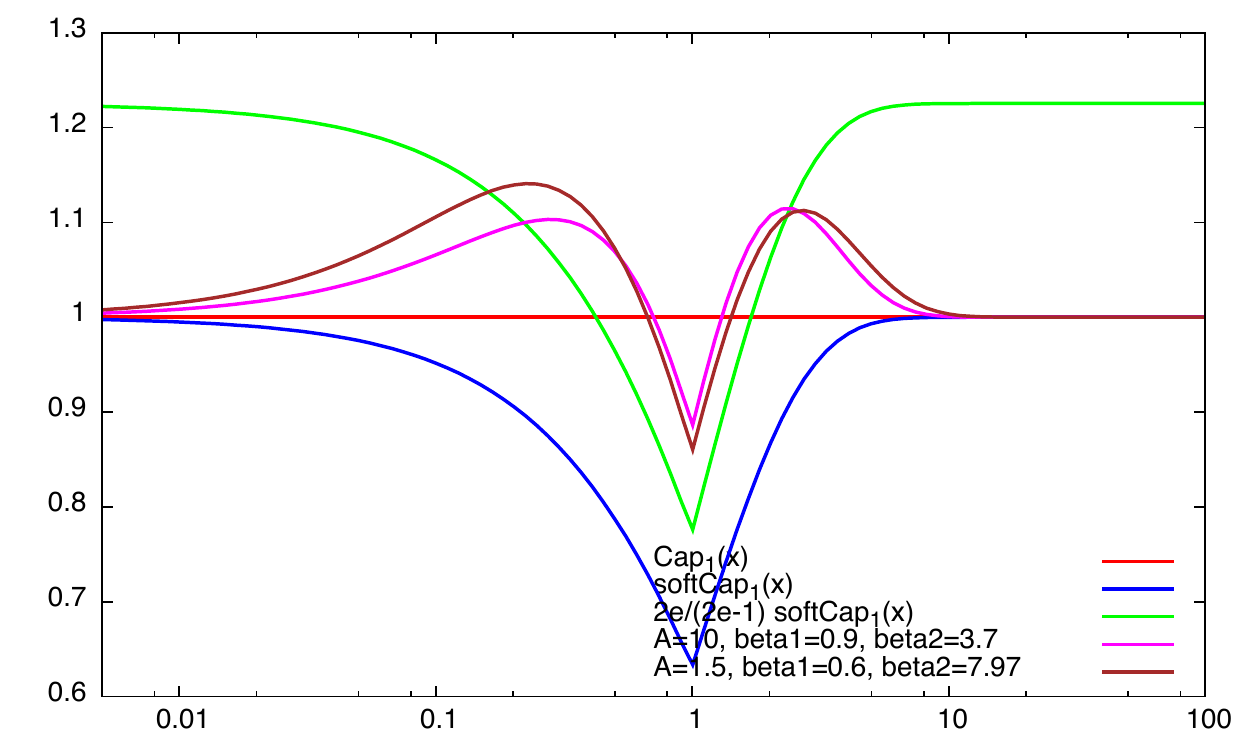}
\caption{Left: $\Cap_1(w)$ function and different approximations:
With a single $\LapM$ point measurement we can use
the soft capping function 
 $\softCap_1(w)$ or scaling it $\frac{2e}{2e-1} \softCap_1(w)$ to minimize
the worst-case relative error.  We also show two 3 point measurements
 approximations with different parameters. Right: The corresponding ratio
to $\Cap_1$ which shows the relative error of the different approximations.\label{capapproxplot:fig}}
\end{figure*}

\section{Experiments} \label{experiments:sec}

  We performed experiments aimed at demonstrating the 
ease of implementing our schemes and explaining the use of
the parameters that control the approximation quality.   We implemented the 
element mapping functions in Python.  We also use Python implementations of 
approximate  $\MxDCount$ and $\dCount$ sketches.
We generated synthetic data using a Zipf distribution of keys. We set
the values of all elements to $1$.
Zipf distributions are parametrized by $\alpha$ that controls
the skew: The distribution is more skewed (has
more higherer frequency keys) with higher $\alpha$.     Typically
values in the range $\alpha\in [1,2]$ model real-world 
distributions.  We used $\alpha=\{1.1,1.2,1.5,2.0\}$ in our experiments.

\subsection{Point measurements}
In this set of experiments we generated data sets with $10^5$
elements and performed point measurements, which
approximate the statistics $\softCap_T(W)$.   We used
$T=\{1,5,20,100,500\}$.  We applied element
mapping Algorithm~\ref{elemmap1:alg} with parameter $r\in\{1,10,100\}$ to generate output elements $E$.  The output
elements were processed by an approximate distinct counter with
parameter $k=100$ to obtain an estimate of $\dCount(E)$.  The final
estimate \eqref{approxpointm:eq} is the approximate count divided by
$r$.  In our evaluation we also computed the exact count
$\dCount(E)$, to obtain the exact point measurement, and also the
exact value of the statistics $\softCap_T(W)$.

 Recall that there are two sources of error in the approximate measurement.  The first is the error
of the measurement $\widehat{\LapM}[W](t)$ ($\dCount(E)/r$) as an estimate of $\LapM[W](t)$.
 This error is controlled by the parameter $r$ in
   the element mapping Algorithm~\ref{elemmap1:alg}.  We showed that
$r>\epsilon^{-2.5}$ always suffices in the worst-case but generally on
larger data set that are not dominated by few keys we have
$\LapM[W](t) > \epsilon^{-2}$ and $r=1$ suffices.
The second source of error is the approximate counter which returns an
approximation $\widehat{\widehat{\LapM}}[W](t)$ of the measurement $\widehat{\LapM}[W](t)$.  An
approximate counter with $k=\epsilon^{-2}$ has NRMSE $\epsilon$.
In the experiments we used a fixed $k=100$ which has $\epsilon=0.1$.
We performed 200 repetitions of each experiment and computed the
average errors.  Results are
illustrated in Figure~\ref{pointexperiments:fig}: The left plot in
the figure shows the average number of distinct output elements
generated for different Zipf parameters $\alpha$ and $T$ value for
$r=1$.  The expectation of $\dCount(E)$ is equal to
$r \widehat{\LapM}[W](1/T)$.
 The middle plot shows the
normalized root mean squared (NRMSE) error of the measurement
$\frac{T}{r}\dCount(E)$ that uses the exact distinct count of output keys.  We can see
that the error rapidly decreases with the parameter $r$ and that it
is very small.  The right plot shows the error of the approximate
measurement as a function of $r$, obtained by applying an approximate
counter to $E$.  The additional error component is an 
error with NRMSE $10\%$ with respect to the measurement, and
is independent  of the measurement error.  We can see that the
approximation error dominates the total error which is about $10\%$.

\begin{figure*}
\center
\includegraphics[width=0.30\textwidth]{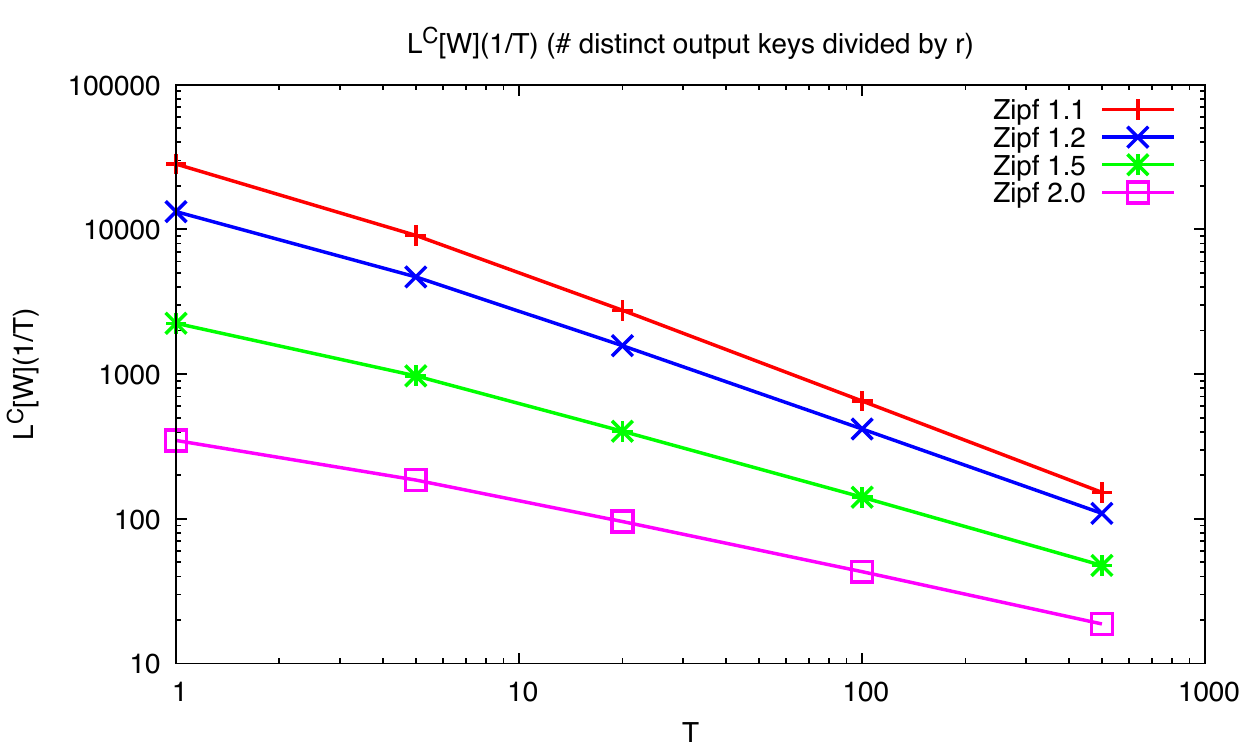}
\includegraphics[width=0.30\textwidth]{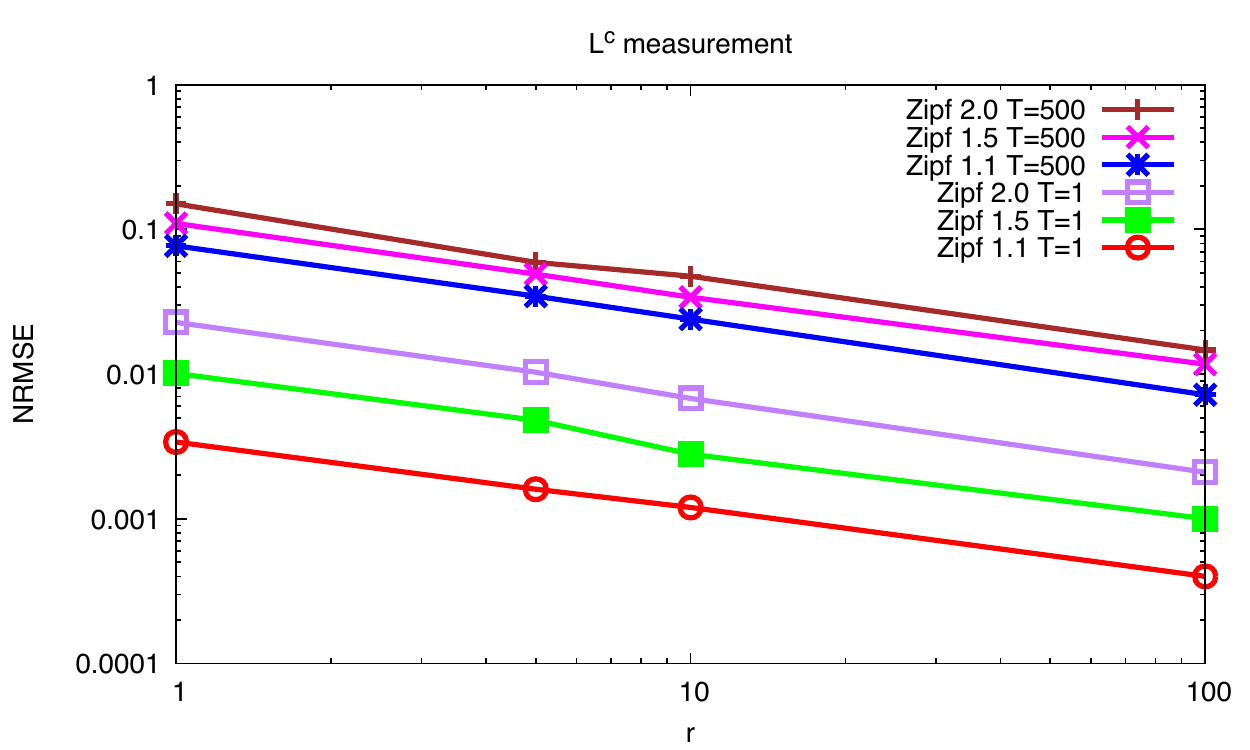}
\includegraphics[width=0.30\textwidth]{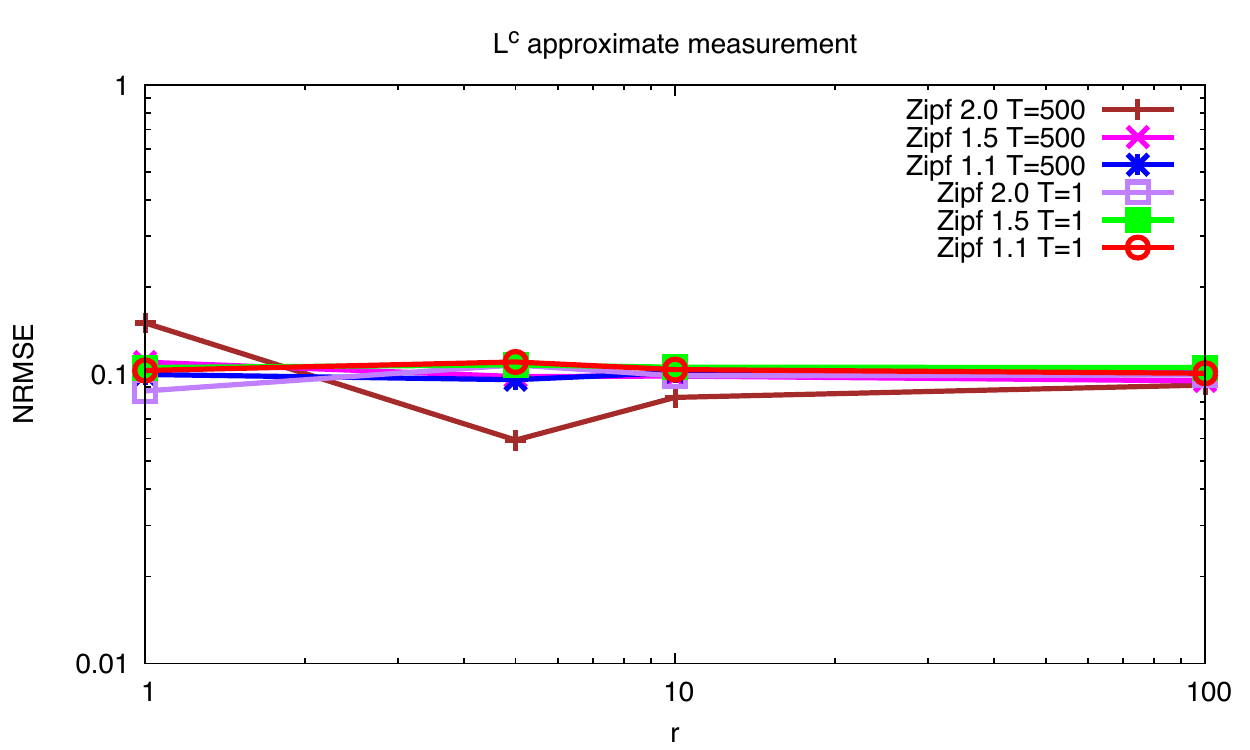}
\caption{Point measurements experiments on $10^5$ elements with Zipf
  keys (averaged over 200 repetitions).  Left: $\LapM[W](1/T)$,
  Middle:  NRMSE of exact measurement $\widehat{\LapM}[W](1/T)$. Right:
  NRMSE of approximate measurement $\widehat{\widehat{\LapM}}[W](1/T)$.\label{pointexperiments:fig}}
\end{figure*}

\subsection{Combination measurements}
In this set of experiments we estimate the statistics $f(W)$ for $f(w) =\sqrt{w}$ using approximate combinations measurements.
We used the element mapping Algorithm~\ref{mxdistinct:algo} with
$\tau=0$.  
Note that the estimates we get are unbiased but the NRMSE error can be larger due to the contributions of the $t$ regime with a very small number of output keys.
Figure~\ref{combexperiments:fig} shows the NRMSE of the measurement  $\frac{1}{r}\MdCount(E)$ 
for different values of $r$.   The application of an approximate
$\MdCount$ counter of size $\epsilon^{-2}$ introduces NRMSE  of at most $\epsilon$ to this measurement.
\begin{figure}
\center
\includegraphics[width=0.30\textwidth]{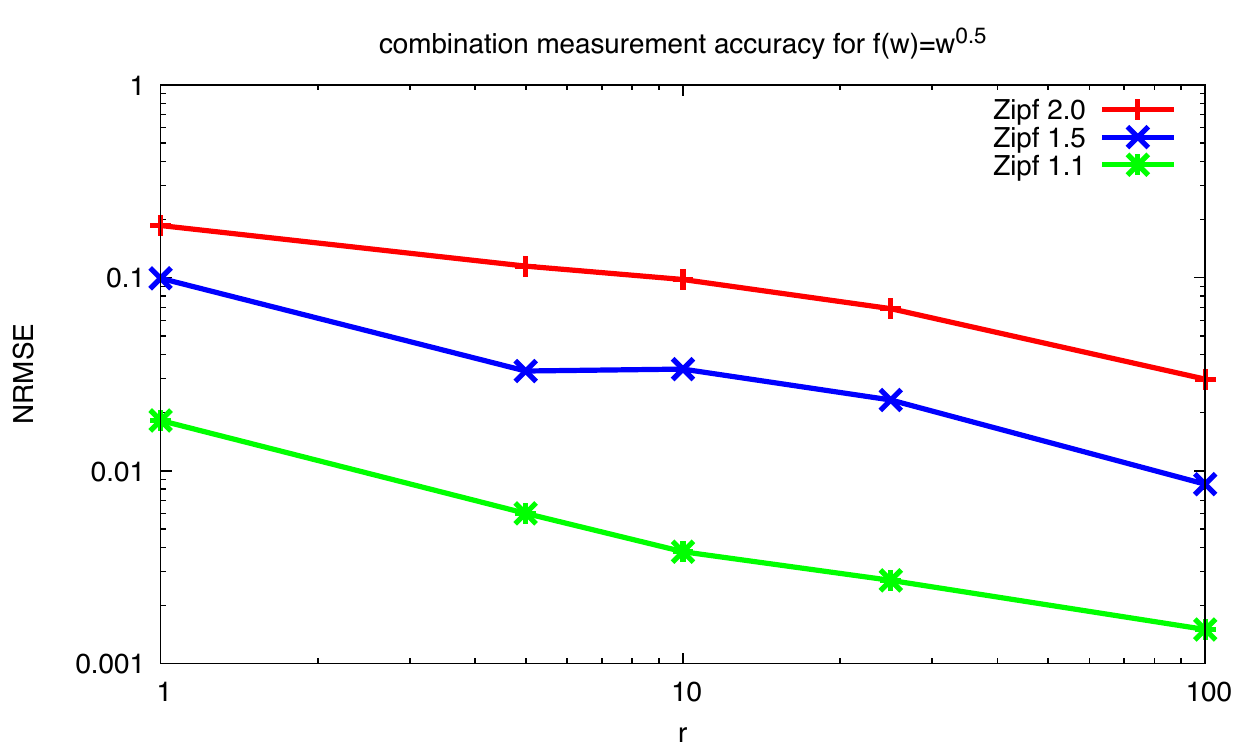}
\caption{Combination measurement NRMSE for $f(w)=\sqrt{w}$ as a
  function of $r$ on $10^5$ elements with Zipf distributed   keys, averaged over 200 repetitions.\label{combexperiments:fig}}
\end{figure}

\section{Conclusion} \label{conclu:sec}
We presented a novel elegant framework for composable
sketching of concave sublinear statistics.  We obtain
state-of-the-art asymptotic bounds on sketch size together with highly
efficient practical solution.  

We leave open the intriguing question of fully understanding the limits of 
our approach and in particular, understand  if  the scope of
sample-based sketching is limited to (sub)linear statistics (we suspect it does). 
Another question is to precisely quantify the approximation tradeoff for $\Cap_1$ (and
hence for any $\overline{\Cap}$ function)

\ignore{

 On the applied end, we expect our simple and practical algorithms to be used in multiple domains.  On the theory end, several intriguing fundamental questions are left for
future work:

\begin{itemize}
\ignore{
\item
In principle, {\em if} we had the {\em exact} $\LapM[W](t)$ transform,
we could compute the $f$-statistics for any $f$ using an inverse
transform \eqref{invtransform}
$a(t)$ and \eqref{fromL:eq}.
\ignore{
we could recover $W(w)$ using the relation 
$$W(w) = D \Laplace^{-1}[1-\LapM[W](t)](w)\ ,$$  where $D = \int_0^\infty W(w) 
dw = \lim_{t\rightarrow \infty} \LapM[W](t)$ is the distinct count,  and using the (essential) 
invertibility of the Laplace transform. 
We could then estimate any $f$-statistics using the relation 
$$f(W) = D \int_0^\infty f(w) \Laplace^{-1}[1-\LapM[W](t)](w) dw\ .$$
}
Clearly, only having access to the approximate transform
$\widehat{\widehat{\LapM[W]}}(t)$ considerably weakens what we can
hope to  approximate.  In particular, we know that we can not approximate
    statistics that are hard for streaming such as moments with $p>2$
    \cite{ams99}.
Can we precisely quantify the tradeoff
between the  approximation quality of the transform and the
approximation quality we can obtain for  $f(W)$, for different $f$ ?
}
\item
  We proposed a general framework that performs approximate ``measurements''
  of the frequency distribution.  We then introduce $\LapM$ 
  measurements (and the related Bin measurements) as an instance of
  that. Do $\LapM$ measurements capture the full power of this
  framework or are there other types of measurements that extend
  the family of statistics we can approximate within small relative errors ?
\item
We {\em conjecture} here that $\overline{\Cap}$ statistics 
can be approximated within any specified error using the difference of
two combination
$\LapM$ measurements.   
 Recall that better approximation for $\Cap_1$ carries 
 over to all statistics  in the span $\overline{\Cap}$, so
 it suffices to consider $\Cap_1$.
  We showed an approximation with relative error that is
 at most $12\%$ using three point (or two combination) measurements.  More generally, we seek to
 better understand the tradeoff between the relative error and the parameter $\rho$ that
 determines the overhead for a given approximation quality. 
\item
Finally, it seems that sampling-based sketches are not effective 
beyond the (sub)linear growth regime.  Linear sketches (random linear
projections) \cite{JLlemma:1984,ams99,BravermanOstro:STOC2010} are
effective up to quadratic growth, beyond which the statistics become
hard for streaming, requiring polynomial state size \cite{ams99}.  It
would be interesting to formalize this observation.
\ignore{
If this holds then all statistics 
in the span $\overline{\Cap}$ can be approximated using the approximate 
$\LapM$ transform.  If this is true, can we  quantify the tradeoff between 
the number and quality of measurements and the approximation quality 
of the statistics ?
}


\end{itemize}
}

  \small
\bibliographystyle{plain}
\bibliography{cycle}

\onlyinproc{\end{document}}
\appendix
\section{Efficient element processing for point measurements} \label{efficientpoint:sec}
 Ideally, we would want to determine the selected indices
  using computation proportional to $O(r (1-\exp(-t \text{\em{e.value}} ))$, which
  is the expected number of output keys returned.
There are many ways to do so and retain the confidence bounds.  We
recommend a way that uses
{\em varopt dependent sampling}
\cite{Cha82,GandhiKPS:jacm06,CCD:VLDB2011,varopt_full:CDKLT10}.
  This shows that, in principle, we can always obtain very tight concentration for 
  $\widehat{\LapM}[W](t)$ as an approximation of $\LapM[W](t)$  (using large 
enough $r$).    

Pseudocode for the inner loop of the efficient element processing 
is provided as Algorithm \ref{varoptproc:alg}.  The inner loop selects 
the samples from $[1,r]$, each in probability $p$, in time proportional to the number of 
selected samples.  The range 
$[1,r]$ 
is logically partitioned to $\lceil r p \rceil$ consecutive 
ranges of size $\lfloor 1/p \rfloor$, where $p = 1-\exp(-t \text{\em{e.value}})$
(the last range may be smaller).  
The probability associated with 
each range is its size times $p$, which is $p \lfloor 1/p \rfloor \leq 1$.   We then varopt sample 
the $O(pr)$ ranges according to these probabilities (e.g., using $\lceil r p \rceil$
{\em pair aggregations}  \cite{CCD:VLDB2011}). 
Finally, for each range that is included in the sample, we 
uniformly return one of the numbers in the range. 

The improved scheme select a varopt sample of the set $[r]$ such that each $i\in [r]$
is included with probability $1-\exp(-t \text{\em{e.value}})$.  Since the joint 
inclusion/exclusion probabilities have the varopt property, the 
concentration bounds  (Lemma \ref{chernoff:lemma}) hold.  Moreover, generally varopt improves quality 
over Poisson sampling by eliminating the variance in the sample size. 

\begin{algorithm2e}\caption{Element processing inner loop with varopt
    sampling (for $\widehat{\LapM}$
    measurements) \label{varoptproc:alg}}
\DontPrintSemicolon
$p \gets 1-\exp(-t \text{\em{e.value}})$ \tcp*[f]{inclusion probability for each
  $i\in [r]$}
\tcp{Compute vector $\boldmath{\pi}$  of sampling probabilities for 
  consecutive partition of $[1,r]$ to subranges of equal size $\lfloor 
  1/p \rfloor$ and the remainder as the last entry.}
$\ell \gets \lceil \frac{ pr }{p\lfloor 1/p \rfloor} \rceil$\;
$rem \gets r \,\text{mod}\, p\lfloor 1/p \rfloor$ \;

\eIf(\tcp*[h]{$\ell_f$ number of full-size ranges}){$rem = 0$}{$\ell_f = \ell$}{$\ell_f = \ell-1$}\;
Initialize a vector $\boldmath{\pi}$ of size $\ell$
\ForEach{$j \leq \ell_f$}{$\pi_j \gets p \lfloor 1/p
\rfloor$}
\If{$rem \not= 0$}{$\pi_\ell \gets p rem$}
\tcp{varopt sample the entries of $\boldmath{\pi}$ }
$S\gets $ varopt sample $\boldmath{\pi}$ \tcp*{$S$ has  $\lfloor pr
  \rfloor$ or $\lceil pr \rceil$ entries equal to $1$ and remaining entries
  are $0$.}
\ForEach{$h =1,\ldots,\ell_f$}{
\If{$S_h = 1$}{$j \sim U[1,\lfloor 1/p \rfloor]$ \tcp*{uniformly at
  random}\; $i \gets (h-1)\lfloor 1/p \rfloor + j$\;
  $\Okeys.append(H_i(\text{\em{e.key}})) $ }
}
\If{$rem \not= 0$ and $S_\ell =1$}{$i \sim U[r-rem+1,r]$ \tcp*{uniform at random}\;
$\Okeys.append(H_i(\text{\em{e.key}})) $}
\end{algorithm2e}

We point out a choice of $r$, for
uniform element values ($\text{\em{e.value}} = 1$ for all 
  elements), which {\em seamlessly} guarantees small element processing and
  tight confidence bounds also when $t$ is very small:
\begin{corollary}.
With uniform elements and $r = c\epsilon^{-2}/t$, we have 
$O(c\epsilon^{-2})$ element processing and probability of relative
error that exceeds $\epsilon$ that is bounded by $2 \exp(-c /5)$.
\end{corollary}
\begin{proof}
The element processing is proportional to the number of outkeys
computed and is $\frac{r}{t} (1-\exp(-t))  \leq c \epsilon^{-2}$.
For the concentration bound we note that $w_x \geq 1$ for all active keys, from \eqref{trivbound:eq}, $\LapM[W](t) \geq (1-1/e) \min\{1, t\}$.
\end{proof}

\section{Inverse $\LapM$ transform derivations} \label{invlapc:sec}
In this section we derive the expressions for the inverse  $\LapM$ transform in
Table~\ref{softtran:table}.  Our main tool is the following Lemma
which expresses
the inverse $\LapM$ transform of $f$ in terms of the inverse Laplace
transform of the derivative of $f(w)$:
\begin{lemma} \label{invlap:eq}
$$ (\LapM)^{-1}[f(w)](t) = \frac{1}{t}\Laplace^{-1}[\frac{\partial f(w)}{\partial w}](t)\ ,$$
where $\Laplace$ is 
the Laplace transform. 
\end{lemma}
\begin{proof}
We look for a solution $a(t)$ of 
$$f(w) = \int_0^\infty a(t) (1-e^{-wt}) dt\ . $$
Differentiating both sides by $w$ we obtain 
$$\frac{\partial f(w)}{\partial w} = \int_0^\infty t a(t) 
e^{-wt} dt = \Laplace[a(t) t](w) \ .$$
\end{proof}

\begin{lemma}  \label{cumbounds:lemma}
For all $T>0$, 
$$\LapM^{-1}[\softCap_T(w)](t) =    T \delta_{1/T}(t)\ ,$$ where $\delta_{1/T}$ is 
the Dirac Delta function  at $1/T$.  
 For    all  $p\in(0,1)$,  
$$\LapM^{-1}[w^p](t) = \frac{p}{\Gamma(1-p)} t^{-(1+p)} \,$$
where $\Gamma$ is the Gamma function.  
$$\LapM^{-1}[\ln(1+w)](t)  = \frac{e^{-t}}{t}\ .$$
 \end{lemma}
\begin{proof}
The claim for the inverse transform of $f(w) = \softCap_T(w)$ follows 
directly from \eqref{softcapdef:eq}. 
For $f(w)=w^p$, we 
apply Lemma \ref{invlap:eq} using $\frac{\partial w^p}{\partial w} = pw^{p-1}$. 
We take the inverse Laplace transform to obtain 
\begin{eqnarray*}
a(t) &=&  \frac{p}{t} \Laplace^{-1}[w^{p-1}](t) \\
 &=& \frac{p}{t} \frac{t^{-p}}{\Gamma(1-p)} \\
 &=& \frac{p}{\Gamma(1-p)}  t^{-(1+p)} \ . 
\end{eqnarray*}
We now consider 
$f(w) = \ln(1+w)$ and apply Lemma~\ref{invlap:eq} substituting the 
derivative $\frac{\partial f(w)}{\partial w} = \frac{1}{1+w}$  and 
$\Laplace^{-1}[1/(w+1)] = e^{-t}$.  We obtain 
$$a(t) =  \frac{1}{t} \Laplace^{-1}[1/(w+1)](t) = \frac{e^{-t}}{t}\ .$$
\end{proof}

\section{Double-logarithmic size max-distinct sketches} \label{mxdistinctcompact:sec}
We outline here basic representation tricks that yield max-distinct sketches of size  $O(\log\log n +
\epsilon^{-2}\log\epsilon^{-1})$ (assuming $1\leq m_x = O(\text{poly}(n))$.) 
Our sketch contains $k$ registers, where each register corresponds to 
a balanced ``bucket'' of output keys.  For balance, we can place all keys in 
all buckets (with different hash function per bucket) or (see Section~\ref{practicalopt:sec}) apply stochastic 
averaging and partition the 
output elements to buckets according to a partition of $[r]$. 
 For an 
element $e$ with key that falls in the bucket, we compute 
$-\ln u_{\text{\em{e.key}}} /\text{\em{e.value}}$ and if smaller than the current register,
replace its value.  The 
distribution of the minimum in the bucket is exponential with parameter equals to 
the sum of $m_x$ over keys in the bucket (see e.g. \cite{ECohen6f}). 
As in Hyperloglog, we only store exponents with base $b=1+\delta$ (for 
some appropriate small fixed $\delta$) 
 We apply consistent (per key/bucket) randomized rounding to an 
integral power of $(1+\delta)$ and store only the negated exponent 
$y_i$ for bucket $i$.  The exponent can be stored in $O(\log\log n)$ bits.   For different 
buckets, we can store one exponent and offsets of expected size $O(1)$
per bucket.  To obtain the approximate statistics from the sketch we 
use the estimator 
$$\widehat{\MdCount(E)} = (k-1)/\sum_{i=1}^k b^{-y_i}\ $$ 
for the total weight of keys over buckets (when stochastic averaging 
is not used we need to divide by $k$.) 

 \subsection{Sidelined output 
   elements} \label{sidelined:sec}
We discuss here compact representation of the $\ell=3\epsilon^{-2}$
``sidelined'' output keys and their $y$ values $(x_i,y_i)$.
Roughly, it suffices to store $O(\log \epsilon)$ per sidelined key (and one 
exponent):  An $O(\log \ell)$ representation of a hash of the key that will allow us to tell it apart (with good probability) from new elements with $y$ value
in the sidelined range.
An offset rounded representation of 
the $y$ value (so that $\int_y^\infty a(t)dt$  is approximated within 
$\epsilon$), and a $\log\epsilon$ representation of the distinct counter
bucket and of the random hash $u_e$,
so that the modification to the approximate counter can be performed 
when $(x_i,v_i)$ leaves the sidelined set.

\section{Practical optimization for point measurements} \label{practicalopt:sec}
An optimization that applies with 
approximate distinct counters (including Hyperloglog) that use stochastic 
averaging is to partition equally the $r$ ``output'' buckets to the 
$k$ counter  buckets.  Then pipe corresponding outkeys to each bucket. In more detail, stochastic averaging counters hash each key to one of $k$ buckets, where each bucket maintains an order 
  statistics (maximum or minimum) of another hash applied to the keys 
  that fall in that bucket. This design can be integrated 
with our element processing parameter $r$.  We 
direct the $r$ indices of outkeys produced for 
  same element to corresponding buckets in the counter, replacing the 
  hash-based assignment to buckets. 

\section{All-threshold sketches} \label{ATsketch:sec}

 At a high level,
 sample-based distinct count sketches can be extended to all-threshold sketches by considering the distinct sketch of $\Okeys_t$ as a function of $t$.  It turns out (analysis of all-distance sketches) that the number of points $t$ where the sketch changes as we sweep $t$ is in expectation $\epsilon^{-2}\ln (rn)$.  In more detail, the Hyperloglog sketch consists of $k=\epsilon^{-2}$ registers that can only increase as keys are processed.  Each register has in expectation $\ln (rn)$ change points as we sweep $t$.

\ignore{
essentially tracking the state of the distinct sketch at all 
Sample-based sketches process elements by applying
appropriate hash functions to the key of the element. 
The sketch maintains a set of registers, with each register storing some order statistics on the
hash values obtained for all (or subsets of) the 
outkeys.  
When an element is processed, the relevant registers and hash functions
are identified and hash-based numeric values are obtained.
  The (appropriate) register(s) are updated by 
taking the maximum of the current content and the output numeric value. 
The estimate we obtain on the distinct count is monotone non-decreasing in the content of 
the registers.  

 As an example, Hyperloglog randomly partitions 
keys to $k$ sets (by hashing them to $[k]$).  There are $k$ registers 
where register $i$ corresponds to all keys that hash to $i$. The 
register stores the maximum (negated) exponent of a 
uniform random hash of the key to the interval $[0,1]$ (here the 
update uses maximum instead of minimum). 

  When the input elements come with a threshold, 
the content of  each register parameterized by $t$ is 
non-increasing with $t$.  To obtain an all-threshold counter, 
each register simply 
  records the order statistics for {\em all} values $t$. 
To do so compactly, we  only need to record values of $t$ where 
the register content changes. 
From the analysis of the size of all distance sketches 
\cite{ECohen6f,ECohenADS:TKDE2015} we know that 
the number of breakpoints across all possible values of $t$ is (in 
expectation and with good concentration) at most  $\ln n$ per 
register and thus  $k\ln n$ in total, where $n$ is the number of 
distinct keys.  
}

The storage overhead factor of all-threshold sketching
is the number of change points of each register ($\ln n$) multiplied 
by the representation of each $t$ value where the change occurred.
We now note that since $\LapM[W](t)$ is smooth and Lipschitz, a
small number of $\log(1/\epsilon)$ significant bits and the exponent suffices.
Since the change points for each register are an increasing sequence, generally from a  polynomial range, it suffices to store changes to the exponent.

\ignore{
  We now consider working with multiple values of $r$ to make element 
  processing more efficient.  We partition $t$ to ranges where in each range we use a 
 particular value of $r$. 
 An upper bound on the number of ranges of $r$ values we need to use is 
$\log  \frac{m \max_e \text{\em{e.value}}}{\min_e \text{\em{e.value}}}\ ,$ where $m$ is the 
number of elements.    We will assume this value is $O(\log n)$. 
}



 With an enhanced sketch, since it records all counts as we sweep $t$, we can 
apply the HIP estimator~\cite{ECohenADS:TKDE2015} to the sketch (with 
respect to the swept parameter $t$) to   obtain full-range estimates 
for all $t$. 
\ignore{
  The representation of the approximate transform 
  $\widehat{\widehat{\LapM}}[W](t)$ we obtain using the approximate 
  threshold counts  is piecewise 
  linear non-decreasing monotone function with a logarithmic number of 
breakpoints.    The first piece is linear and has the form $t \sum_x w_x$ and 
other pieces are constant.  The last piece has the 
approximate distinct count $\hat{n}$ (see discussion in 
Section~\ref{rchoice:sec} on the relevant 
range). 
When we use the parameter $r = C\epsilon^{-2}$, the slope of the first 
piece is either approximated from a (straightforward) separate computation of the sum 
$\sum_x w_x$ or from the smallest $t$ for which the total approximate 
count 
$\TdCount(\Oelems(W))$ exceeded $\Omega(C \epsilon^{-2})$. 
}

\section{Bin Transform} \label{bintrans:sec}
A related discrete transform is the 
{\em Bin transform}   $\BinM[W](\ell)$ for an integer parameter 
$\ell > 0$ is defined as 
$$\BinM[W](\ell) \equiv \int_0^\infty  W(w) \ell(1-(1-1/\ell)^w) dw\ .$$  
  We present a mapping scheme of elements to outkeys that for parameters $\ell,r$ has expected 
  distinct count equal to the Bin trasform of the frequencies.  The mapping 
applies for  data sets of elements with uniform values $\text{\em{e.value}} =1$. 
 We use $r$ sets of $\ell$ independent hash functions $H^{(j)}_i$ for $i\in [\ell]$ and $j \in [r]$. 
 The hash functions are applied to element keys and produces output keys.  We 
  assume that $H^{(j)}_i(x)$ are unique for all $x,i,j$. 
  When processing an element $e$, we draw for each $j\in [r]$, a value 
  $i \sim U[\ell]$ uniformly at random.  We generate the output key 
$H^{(j)}_i(\text{\em{e.key}})$.  Note that $r$ output keys are generated for each 
element. 
Our measurement $\widehat{\BinM}[W](\ell)$ is the number of distinct 
outkeys divided by $r$.

 The expected contribution of a key $x$ with weight $w_x$ to the 
 measurement (the expected number of distinct outkeys produced for 
 elements with that key divided by $r$) is $\ell(1-(1-1/\ell)^w)$
 which is close to $\LapM$measurements.  
 We focus more on the continuous $\LapM$ transform, however, as it is nicer to work with. 
A variation of the Bin transform was used in \cite{freqCap:KDD2015} for 
weighted sampling of keys and it was also proposed to use them for counting 
(computing statistics over the full data).  

\section{Extension: Time-decaying aggregation} \label{decay:sec}

We presented simple ``reductions'' from the approximation 
 of (sub)linear growth statistics to distinct or max-distinct counting and sum. 
 We observe that our reductions can also be combined with time-decaying aggregation \cite{CoSt:pods03f}
 (e.g., sliding window models), with essentially out-of-the-box use of respective structures 
for distinct counting and sum.  For our enhanced (max-distinct,
all-threshold) counters we can obtain time-decaying aggregations for 
all decay functions by essentially ``distance sketching'' the time
domain (see \cite{ECohenADS:TKDE2015}).

\section{Extension: Weighted keys} \label{weighted:sec}

  A useful extension is to approximate
aggregations of the form  
\begin{equation} \label{weightedkeysagg:eq}
\sum_x f(w_x) v_x\ ,
\end{equation}
 where $v_x > 0$ are intrinsic weight associated 
with key $x$. 
With the distribution interpretation, we have $W(w)$ as the weighted
sum $$W(w) = \sum_{x \mid w_x=w} v_x$$ of all keys with frequency $w_x=w$.
Our presentation focused on $v_x =1$ for all keys.
We outline how our 
algorithms can be extended to handle general $v_x$ weights.
 The assumption here is that $v_x$ is available to us when an
element with key $x$ is processed.

 One application is computing a loss function or its gradient  when
 computing embeddings.  The
 data elements have entity and context pairs $(i,j)$.  Each entity $i$
 has a current embedding vector $\boldmath{v}_i$ and similarly, contexts $j$ have
 embeddings $\tilde{\boldmath{v}}_j$.    The pairs are our keys and their values
 are a function of the inner produce $\boldmath{v}_i\cdot \tilde{\boldmath{v}}_j$.   The weight
 $w_x$ of a key $x=(i,j)$ is the number of occurrences of the key in
 the data.   We would like to estimate \eqref{weightedkeysagg:eq} efficiently.

 When $f$ is a soft capping function, this can be done with a single modified
 point measurement.  For $f\in \overline{\softCap}$, we need a
 modified combination measurement.

  The modification for point measurements (Element processing
  Algorithm~\ref{elemmap1:alg}) 
produces output elements $(H_i(x),v_x)$ 
instead of respective output keys.  The measurement we seek is then a
a {\em weighted distinct statistics} of the output elements, which
returns the sum $\sum_y v_y$ over distinct output element keys
(divided by $r$). 
The measurement is approximated using an
approximate {\em weighted} distinct counter instead of a plain
distinct counter.

Weighted distinct statistics are a special case of max-distinct 
statistics, where elements with a certain key always have the same 
value.  Therefore, they  can be approximated using approximate max-distinct 
counters (see Section~\ref{maxdistinctcount:sec}).

The modification of Algorithm~\ref{mxdistinct:algo} needed to obtain a
combination measurements of \eqref{weightedkeysagg:eq} 
replaces the value $y$ of the output elements generated for an element 
with key $x$ by the product $v_x y$.  The elements are processed as 
before by an approximate max-distinct counter. The quality and
structure size tradeoffs remain the same.

\end{document}